\documentclass[fullpaper,9pt]{IEEEtran}

\usepackage{amsmath,amssymb,enumerate,subfigure,multirow,hhline,color}
\usepackage{xcolor}
\usepackage{hyperref}
\usepackage{graphicx}
\usepackage{graphics}
\usepackage[sort]{cite}
\usepackage{amsthm}
\usepackage{algorithm}
\usepackage{algpseudocode}

\usepackage{tcolorbox}
\usepackage{epsfig,psfrag}
\usepackage{tabularx}
\usepackage{tabulary}

\usepackage[sort]{cite}

\usepackage{hyperref}
\hypersetup{breaklinks=true,colorlinks=true,linkcolor=blue,citecolor=blue}

\usepackage[noabbrev,capitalise,nameinlink]{cleveref}

\DeclareMathOperator{\argmax}{arg\,max}

\newtheorem{theorem}{Theorem}

\newtheorem{assumption}{Assumption}
\newtheorem{remark}{Remark}
\newtheorem{corollary}{Corollary}

\newtheorem{lemma}{Lemma}
\newtheorem{definition}{Definition}
\newtheorem{proposition}{Proposition}

\allowdisplaybreaks

\title{Final Iteration Convergence Bound of Q-Learning: Switching System Approach}

\author{Donghwan Lee% <-this % stops a space
\thanks{D. Lee is with the Department of Electrical Engineering,
KAIST, Daejeon, 34141, South Korea {\tt\small
donghwan@kaist.ac.kr}.}
}

\begin{document}

\maketitle

\begin{abstract}

Q-learning is known as one of the fundamental reinforcement learning (RL) algorithms. Its convergence has been the focus of extensive research over the past several decades. Recently, a new finite-time error bound and analysis for Q-learning was introduced using a switching system framework. This approach views the dynamics of Q-learning as a discrete-time stochastic switching system. The prior study established a finite-time error bound on the averaged iterates using Lyapunov functions, offering further insights into Q-learning. While valuable, the analysis focuses on error bounds of the averaged iterate, which comes with the inherent disadvantages: it necessitates extra averaging steps, which can decelerate the convergence rate. Moreover, the final iterate, being the original format of Q-learning, is more commonly used and is often regarded as a more intuitive and natural form in the majority of iterative algorithms. In this paper, we present a finite-time error bound on the final iterate of Q-learning based on the switching system framework. The proposed error bounds have different features compared to the previous works, and cover different scenarios. Finally, we expect that the proposed results provide additional insights on Q-learning via connections with discrete-time switching systems, and can potentially present a new template for finite-time analysis of more general RL algorithms.
\end{abstract}
\begin{IEEEkeywords}
Reinforcement learning, Q-learning, switching system, convergence, finite-time analysis
\end{IEEEkeywords}

\section{Introduction}

Reinforcement learning (RL) addresses the optimal sequential decision making problem for unknown systems through
experiences~\cite{sutton1998reinforcement}. Recent successes of RL algorithms outperforming humans in several challenging tasks~\cite{mnih2015human,wang2016dueling,lillicrap2016continuous,heess2015memory,van2016deep,bellemare2017distributional,schulman2015trust,schulman2017proximal}
have triggered a surge of interests in RL both theoretically and
experimentally. Among many others, Q-learning~\cite{watkins1992q} is one of the most fundamental and popular RL algorithms, and its convergence has been extensively studied over the past decades. Classical analysis mostly focuses on asymptotic convergence~\cite{tsitsiklis1994asynchronous,jaakkola1994convergence,borkar2000ode,hasselt2010double,melo2008analysis,lee2020unified,devraj2017zap}.
While crucial, the asymptotic convergence cannot measure the speed at which iterations approach a solution. Consequently, the efficiency of the related algorithms cannot be precisely assessed. For this reason, finite-time convergence analysis, which quantifies how fast the iterations progress toward the solution, has gained increasing attention recently.

Recently, advances have been made in finite-time convergence analysis~\cite{szepesvari1998asymptotic,kearns1999finite,even2003learning, azar2011speedy,beck2012error,wainwright2019stochastic,qu2020finite,li2020sample,chen2021lyapunov,lee2020periodic,lee2021discrete}. Most of the existing results treat the Q-learning dynamics as nonlinear stochastic approximations~\cite{kushner2003stochastic}, and use the contraction property of the Bellman equation. Recently,~\cite{lee2021discrete} proposed a new perspective of Q-learning based on discrete-time switching system models~\cite{liberzon2003switching,lin2009stability}, and established a finite-time analysis based on tools in control theory~\cite{chen1995linear,khalil2002nonlinear}. The switching system perspective captures unique features of Q-learning dynamics, and allows us to convert the notion of finite-time convergence analysis into the stability analysis of dynamic control systems. While valuable, the analysis in~\cite{lee2021discrete} focuses on error bounds of the {\em average iterate}, which comes with the inherent disadvantages: it necessitates extra averaging steps, which can decelerate the convergence rate. Moreover, the final iterate, being the original format of Q-learning, is more commonly used and is often regarded as a more intuitive and natural form in the majority of iterative algorithms. Therefore, it is more interesting to study the convergence and finite-time analysis of the final iterate.

Given these considerations, the main goal of this paper is to present a finite-time error bound on the {\em final iterate} of Q-learning based on the switching system framework in~\cite{lee2021discrete} for additional insights and complementary analysis.
In particular, we improve the analysis in~\cite{lee2021discrete} by replacing the average iterate with the final iterate and deriving the following bound:
\begin{align}
{\mathbb E}[\left\| {Q_k  - Q^* } \right\|_\infty  ] \le& \frac{{9 d_{\max } |{\cal S} \times {\cal A}|\alpha ^{1/2} }}{{d_{\min }^{3/2} (1 - \gamma )^{5/2} }} + \frac{{2|{\cal S} \times {\cal A}|^{3/2} }}{{1 - \gamma }}\rho ^k \nonumber \\
& + \frac{{4\gamma d_{\max } |{\cal S} \times {\cal A}|^{2/3} }}{{1 - \gamma }}\frac{1}{{d_{\min } (1 - \gamma )}}\rho ^{k/2 - 1},\label{eq:proposed-error-bound}
\end{align}
where $|{\cal S} \times {\cal A}|$ is the number of the state-action pairs, $\gamma$ is the discount factor, $d_{\min}$ is the minimum state-action occupation frequency, $\alpha \in (0,1)$ is the constant step-size, $Q^*$ is a vectorized optimal Q-function, $Q_k$ is its estimation at the current time $k$, and $\rho := 1 - \alpha d_{\min} (1-\gamma) \in (0,1)$ is the exponential decay rate. Moreover, the sample complexity, $\tilde {\cal O}\left( \frac{d_{\max}^4 |{\cal S}\times {\cal A}|^4}{\varepsilon^4 d_{\min}^6 (1-\gamma)^{10}} \right)$, in~\cite{lee2021discrete} for $\varepsilon$-optimal solution can be improved to $\tilde {\cal O}\left( {\frac{{\gamma ^2 d_{\max }^2 |{\cal S} \times {\cal A}|^2 }}{{\varepsilon ^2 d_{\min }^4 (1 - \gamma )^6 }}} \right)$ from the proposed approach, where $d_{\max}$ is the maximum state-action occupation frequency, and $\tilde {\cal O}$ ignores the constant and polylogarithmic factors. We note that this extension is not trivial, and significantly different approaches have been adopted in this paper.

The proposed analysis relies on propagations of the autocorrelation matrix instead of the Lyapunov function analysis used in~\cite{lee2021discrete}, which allows conceptually simpler analysis. It also provides additional insights on Q-learning via connections with discrete-time switching systems, and can potentially present a new template for finite-time analysis of more general RL algorithms. Moreover, the new perspective can potentially stimulate synergy between control theory and RL, and open up opportunities to the design of new RL algorithms. The proposed error bounds have different features compared to the previous works, and cover different cases detailed throughout this paper. Finally, we note that this paper only covers an i.i.d. observation model with constant step-sizes for simplicity of the overall analysis. Extensions to more complicated scenarios are not the main purpose of this paper.

\paragraph*{Related Works}
Recently, some progresses have been made in finite-time analysis of Q-learning~\cite{szepesvari1998asymptotic,kearns1999finite,even2003learning,azar2011speedy,beck2012error,wainwright2019stochastic,qu2020finite,li2020sample,lee2020periodic,chen2021lyapunov,lee2021discrete}. In particular,~\cite{szepesvari1998asymptotic} provided a finite-time convergence rate with state-action dependent diminishing step-sizes. The authors in~\cite{kearns1999finite} analyzed a batch version of synchronous Q-learning, called phased Q-learning, with finite-time bounds. The authors of~\cite{even2003learning} developed convergence rates for both synchronous and asynchronous Q-learning with polynomial and linear step-sizes. \cite{azar2011speedy} proposed a variant of synchronous Q-learning called speedy Q-learning by adding a momentum term, and obtained an accelerated learning rate. A finite-time analysis of asynchronous Q-learning with constant step-sizes was considered in~\cite{beck2012error}. Afterwards, many advances have been made recently in finite-time analysis. The paper~\cite{lee2020periodic} developed the so-called periodic Q-learning mimicking the stochastic gradient-based training scheme in~\cite{mnih2015human} with periodic target updates. The paper~\cite{wainwright2019stochastic} provided finite-time bounds for general synchronous stochastic approximation, and applied it to a synchronous Q-learning with state-independent diminishing step-sizes. In~\cite{qu2020finite}, a finite-time convergence rate of general asynchronous stochastic approximation scheme was derived, and it was applied to asynchronous Q-learning with diminishing step-sizes. Subsequently,~\cite{li2020sample} obtained sharper bounds under constant step-sizes,~\cite{chen2021lyapunov} provided a Lyapunov method-based analysis for general stochastic approximations and Q-learning with both constant and diminishing step-sizes, and~\cite{lee2021discrete} proposed a switching system perspective of Q-learning, and established a finite-time analysis.

%%%%%%%%%%%%%%%%%%%%%%%%%%%%%%%%%%%%%%%%%%%%%%%%%%%%%%%%%%%%%%%%%%%%%%%%%%%%%%%%
\section{Preliminaries}\label{sec:preliminaries}

\subsection{Notation}
The adopted notation is as follows: ${\mathbb R}$: set of real numbers; ${\mathbb R}^n $: $n$-dimensional Euclidean
space; ${\mathbb R}^{n \times m}$: set of all $n \times m$ real
matrices; $A^T$: transpose of matrix $A$; $A \succ 0$ ($A \prec
0$, $A\succeq 0$, and $A\preceq 0$, respectively): symmetric
positive definite (negative definite, positive semi-definite, and
negative semi-definite, respectively) matrix $A$; $I$: identity matrix with appropriate dimensions; $\lambda_{\min}(A)$ and $\lambda_{\max}(A)$ for any symmetric matrix $A$: the minimum and maximum eigenvalues of $A$; $|{\cal S}|$: cardinality of a finite set $\cal S$; ${\rm tr}(A)$: trace of any matrix $A$; $A \otimes B$: Kronecker product of matrices $A$ and $B$.

\subsection{Markov decision problem}
We consider the infinite-horizon discounted Markov decision problem (MDP) and Markov decision process, where the agent sequentially takes actions to maximize cumulative discounted rewards. In a Markov decision process with the state-space ${\cal S}:=\{ 1,2,\ldots ,|{\cal S}|\}$ and action-space ${\cal A}:= \{1,2,\ldots,|{\cal A}|\}$, the decision maker selects an action $a \in {\cal A}$ at the current state $s$, then the state
transits to the next state $s'$ with probability $P(s,a,s')$, and the transition incurs a
reward $r(s,a,s')$, where $P(s,a,s')$ is the state transition probability from the current state
$s\in {\cal S}$ to the next state $s' \in {\cal S}$ under action $a \in {\cal A}$, and $r(s,a,s')$ is the reward function. For convenience, we consider a deterministic reward function and simply write $r(s_k,a_k ,s_{k + 1}) =:r_k,k \in \{ 0,1,\ldots \}$.

A deterministic policy, $\pi :{\cal S} \to {\cal A}$, maps a state $s \in {\cal S}$ to an action $\pi(s)\in {\cal A}$. The objective of the Markov decision problem (MDP) is to find a deterministic optimal policy, $\pi^*$, such that the cumulative discounted rewards over infinite time horizons is
maximized, i.e.,
\begin{align*}
\pi^*:= \argmax_{\pi\in \Theta} {\mathbb E}\left[\left.\sum_{k=0}^\infty {\gamma^k r_k}\right|\pi\right],
\end{align*}
where $\gamma \in [0,1)$ is the discount factor, $\Theta$ is the set of all deterministic policies, $(s_0,a_0,s_1,a_1,\ldots)$ is a state-action trajectory generated by the Markov chain under policy $\pi$, and ${\mathbb E}[\cdot|\pi]$ is an expectation conditioned on the policy $\pi$. Q-function under policy $\pi$ is defined as
\begin{align*}
Q^{\pi}(s,a)={\mathbb E}\left[ \left. \sum_{k=0}^\infty {\gamma^k r_k} \right|s_0=s,a_0=a,\pi \right], (s,a)\in {\cal S} \times {\cal A},
\end{align*}
and the optimal Q-function is defined as $Q^*(s,a)=Q^{\pi^*}(s,a)$ for all $(s,a)\in {\cal S} \times {\cal A}$. Once $Q^*$ is known, then an optimal policy can be retrieved by the greedy policy $\pi^*(s)=\argmax_{a\in {\cal A}}Q^*(s,a)$. Throughout, we assume that the MDP is ergodic so that the stationary state distribution exists and the Markov decision problem is well posed.

\subsection{Switching system}

Since a switching system~\cite{liberzon2003switching,lin2009stability} is a special form of nonlinear systems~\cite{khalil2002nonlinear}, we first consider the nonlinear system
\begin{align}
x_{k+1}=f(x_k),\quad x_0=z \in {\mathbb R}^n,\quad k\in \{0,1,\ldots \},\label{eq:nonlinear-system}
\end{align}
where $x_k\in {\mathbb R}^n$ is the state and $f:{\mathbb R}^n \to {\mathbb R}^n$ is a nonlinear mapping. An important concept in dealing with the nonlinear system is the equilibrium point. A point $x=x^*$ in the state-space is said to be an equilibrium point of~\eqref{eq:nonlinear-system} if it has the property that whenever the state of the system starts at $x^*$, it will remain at $x^*$~\cite{khalil2002nonlinear}. For~\eqref{eq:nonlinear-system}, the equilibrium points are the real solutions of the equation $f(x) = x$. The equilibrium point $x^*$ is said to be globally asymptotically stable if for any initial state $x_0 \in {\mathbb R}^n$, $x_k \to x^*$ as $k \to \infty$.

Next, let us consider the particular system, called the \emph{linear switching system},
\begin{align}
&x_{k+1}=A_{\sigma_k} x_k,\quad x_0=z\in {\mathbb
R}^n,\quad k\in \{0,1,\ldots \},\label{eq:switched-system}
\end{align}
where $x_k \in {\mathbb R}^n$ is the state, $\sigma\in {\mathcal M}:=\{1,2,\ldots,M\}$ is called the mode, $\sigma_k \in
{\mathcal M}$ is called the switching signal, and $\{A_\sigma,\sigma\in {\mathcal M}\}$ are called the subsystem matrices. The switching signal can be either arbitrary or controlled by the user under a certain switching policy. Especially, a state-feedback switching policy is denoted by $\sigma_k = \sigma(x_k)$. A more general class of systems is the {\em affine switching system}
\begin{align}
&x_{k+1}=A_{\sigma_k} x_k + b_{\sigma_k},\quad x_0=z\in {\mathbb
R}^n,\quad k\in \{0,1,\ldots \},\label{eq:affine-switching-system}
\end{align}
where $b_{\sigma_k} \in {\mathbb R}^n$ is the additional input vector, which also switches according to $\sigma_k$. Due to the additional input $b_{\sigma_k}$, its stabilization becomes much more challenging.

\subsection{Assumptions and definitions}
\begin{algorithm}[t]
\caption{Q-Learning with a constant step-size}
  \begin{algorithmic}[1]
    \State Initialize $Q_0 \in {\mathbb R}^{|{\cal S}||{\cal A}|}$ randomly such that $\left\| {Q_0 } \right\|_\infty   \le 1$.
    \State Sample $s_0\sim p$
    \For{iteration $k=0,1,\ldots$}
    	\State Sample $a_k\sim \beta(\cdot|s_k)$ and $s_k\sim p(\cdot)$
        \State Sample $s_k'\sim P(s_k,a_k,\cdot)$ and $r_k= r(s_k,a_k,s_k')$
        \State Update $Q_{k+1}(s_k,a_k)=Q_k(s_k,a_k)+\alpha \{r_k+\gamma\max_{u \in {\cal A}} Q_k(s_k',u)-Q_k (s_k,a_k)\}$
    \EndFor
  \end{algorithmic}\label{algo:standard-Q-learning2}
\end{algorithm}
In this paper, we focus on the standard Q-learning in~\cref{algo:standard-Q-learning2} with a constant step-size $\alpha \in (0,1)$ under the following setting: $\{(s_k,a_k,s_k')\}_{k=0}^{\infty}$ are i.i.d. samples under the behavior policy $\beta$, where the time-invariant behavior policy is the policy by which the RL agent actually behaves to collect experiences. Note that the notation $s_k'$ implies the next state sampled at the time step $k$, which is used instead of $s_{k+1}$ in order to distinguish $s_k'$ from $s_{k+1}$. In this paper, the notation $s_{k+1}$ indicate the current state at the iteration step $k+1$, while it does not depend on $s_k$. For simplicity, we assume that the state at each time is sampled from the stationary state distribution $p$, and in this case, the state-action distribution at each time is identically given by
\begin{align*}
d(s,a) = p (s)\beta (a|s),\quad (s,a) \in {\cal S} \times {\cal A}.
\end{align*}
\begin{remark}
In this paper, we assume that the behavior policy $\beta$ is time-invariant, and this scenario excludes the common method of using the $\varepsilon$-greedy behavior policy with $\varepsilon > 0$ because the $\varepsilon$-greedy behavior policy depends on the current Q-iterate, and hence is time-varying. Moreover, the proposed analysis cannot be easily extended to the analysis of Q-learning with the $\varepsilon$-greedy behavior policy due to reasons that will appear later in this paper.
\end{remark}

Throughout, we make the following assumptions for convenience.
\begin{assumption}\label{assumption:positive-distribution}
$d(s,a)> 0$ holds for all $(s,a)\in {\cal S} \times {\cal A}$.
\end{assumption}
\begin{assumption}\label{assumption:step-size}
The step-size is a constant $\alpha \in (0,1)$.
\end{assumption}
\begin{assumption}\label{assumption:bounded-reward} The reward is bounded as follows:
\begin{align*}
\max _{(s,a,s') \in {\cal S} \times {\cal A} \times {\cal S}} |r (s,a,s')| =:R_{\max}\leq 1.
\end{align*}
\end{assumption}
\begin{assumption}\label{assumption:bounded-Q0} The initial iterate $Q_0$ satisfies $\left\| {Q_0 } \right\|_\infty   \le 1$.
\end{assumption}
\begin{remark}
All these assumptions will be used throughout this paper for convergence proof. \cref{assumption:positive-distribution} guarantees that every state-action pair is visited infinitely often for sufficient exploration. This assumption is used when the state-action occupation frequency is given. It has been also considered in~\cite{li2020sample} and~\cite{chen2021lyapunov}. The work in~\cite{beck2012error} considers another exploration condition, called the cover time condition, which states that there is a certain time period, within which every state-action pair is expected to be visited at least once. Slightly different cover time conditions have been used in~\cite{even2003learning} and~\cite{li2020sample} for convergence rate analysis. The unit bounds imposed on $R_{\max}$ and $Q_0$ are just for simplicity of analysis. The constant step-size in~\cref{assumption:step-size} has been also studied in~\cite{beck2012error} and~\cite{chen2021lyapunov} using different approaches.
\end{remark}

The following quantities will be frequently used in this paper; hence, we define the corresponding notations for convenience.
\begin{definition}
\begin{enumerate}
\item Maximum state-action occupation frequency:
\[
d_{\max} := \max_{(s,a)\in {\cal S} \times {\cal A}} d(s,a) \in (0,1).
\]

\item Minimum state-action occupation frequency:
\[
d_{\min}:= \min_{(s,a) \in {\cal S} \times {\cal A}} d(s,a) \in (0,1).
\]

\item Exponential decay rate:
\begin{align}\label{eq:rho}
    \rho:=1 - \alpha d_{\min} (1-\gamma).
\end{align}
Under~\cref{assumption:step-size}, the decay rate satisfies $\rho \in (0,1)$.
\end{enumerate}
\end{definition}

Throughout the paper, we will use the following matrix notations for compact dynamical system representations:
\begin{align*}
P:=& \begin{bmatrix}
   P_1\\
   \vdots\\
   P_{|{\cal A}|}\\
\end{bmatrix},\; R:= \begin{bmatrix}
   R_1 \\
   \vdots \\
   R_{|{\cal A}|} \\
\end{bmatrix},
\; Q:= \begin{bmatrix}
   Q(\cdot,1)\\
  \vdots\\
   Q(\cdot,|{\cal A}|)\\
\end{bmatrix},\\
D_a:=& \begin{bmatrix}
   d(1,a) & & \\
   & \ddots & \\
   & & d(|{\cal S}|,a)\\
\end{bmatrix},
\; D:= \begin{bmatrix}
   D_1 & & \\
    & \ddots  & \\
    & & D_{|{\cal A}|} \\
\end{bmatrix},
\end{align*}
where $P_a = P(\cdot,a,\cdot)\in {\mathbb R}^{|{\cal S}| \times |{\cal S}|}$, $Q(\cdot,a)\in {\mathbb R}^{|{\cal S}|},a\in {\cal A}$ and $R_a(s):={\mathbb E}[r(s,a,s')|s,a]$.
Note that $P\in{\mathbb R}^{|{\cal S}\times {\cal A}| \times |{\cal S}|  }$, $R \in {\mathbb R}^{|{\cal S}\times {\cal A}|}$, $Q\in {\mathbb R}^{|{\cal S}\times {\cal A}|}$, and $D\in {\mathbb R}^{|{\cal S}\times {\cal A}| \times |{\cal S}\times {\cal A}|}$.
In this notation, Q-function is encoded as a single vector $Q \in {\mathbb R}^{|{\cal S}\times {\cal A}|}$, which enumerates $Q(s,a)$ for all $s \in {\cal S}$ and $a \in {\cal A}$. In particular, the single value $Q(s,a)$ can be written as
\begin{align*}
Q(s,a) = (e_a  \otimes e_s )^T Q,
\end{align*}
where $e_s \in {\mathbb R}^{|{\cal S}|}$ and $e_a \in {\mathbb R}^{|{\cal A}|}$ are $s$-th basis vector (all components are $0$ except for the $s$-th component which is $1$) and $a$-th basis vector, respectively. Note also that under~\cref{assumption:positive-distribution}, $D$ is a nonsingular diagonal matrix with strictly positive diagonal elements.

For any stochastic policy, $\pi:{\cal S}\to \Delta_{|{\cal A}|}$, where $\Delta_{|{\cal A}|}$ is the set of all probability distributions over ${\cal A}$, we define the corresponding action transition matrix as
\begin{align}
\Pi^\pi:=\begin{bmatrix}
   \pi(1)^T \otimes e_1^T\\
   \pi(2)^T \otimes e_2^T\\
    \vdots\\
   \pi(|S|)^T \otimes e_{|{\cal S}|}^T \\
\end{bmatrix}\in {\mathbb R}^{|{\cal S}| \times |{\cal S}\times {\cal A}|},\label{eq:swtiching-matrix}
\end{align}
where $e_s \in {\mathbb R}^{|{\cal S}|}$.
Then, it is well known that
$
P\Pi^\pi \in {\mathbb R}^{|{\cal S}\times {\cal A}| \times |{\cal S}\times {\cal A}|}
$
is the transition probability matrix of the state-action pair under policy $\pi$.
If we consider a deterministic policy, $\pi:{\cal S}\to {\cal A}$, the stochastic policy can be replaced with the corresponding one-hot encoding vector
$
\vec{\pi}(s):=e_{\pi(s)}\in \Delta_{|{\cal A}|},
$
where $e_a \in {\mathbb R}^{|{\cal A}|}$, and the corresponding action transition matrix is identical to~\eqref{eq:swtiching-matrix} with $\pi$ replaced with $\vec{\pi}$. For any given $Q \in {\mathbb R}^{|{\cal S}\times {\cal A}|}$, denote the greedy policy w.r.t. $Q$ as $\pi_Q(s):=\argmax_{a\in {\cal A}} Q(s,a)\in {\cal A}$.
We will use the following shorthand frequently:
$\Pi_Q:=\Pi^{\pi_Q}.$

The boundedness of Q-learning iterates~\cite{gosavi2006boundedness} plays an important role in our analysis.
\begin{lemma}[Boundedness of Q-learning iterates~\cite{gosavi2006boundedness}]\label{lemma:bounded-Q}
If the step-size is less than one, then for all $k \ge 0$,
\begin{align*}
\|Q_k\|_\infty \le Q_{\max}:= \frac{\max \{R_{\max},\max_{(s,a)\in {\cal S} \times {\cal A}} Q_0 (s,a)\}}{1-\gamma}.
\end{align*}
From~\cref{assumption:bounded-reward} and~\cref{assumption:bounded-Q0}, we can easily see that $Q_{\max}\leq\frac{1}{1-\gamma}$.
\end{lemma}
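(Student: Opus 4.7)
The plan is to prove the bound by a straightforward induction on the iteration index $k$, exploiting the fact that each Q-learning update is a convex combination of the previous entry and a bounded target. The choice of $Q_{\max}$ is essentially the smallest value that makes the inductive step close.

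First I would check the base case. By~\cref{assumption:bounded-Q0} we have $\|Q_0\|_\infty \le 1$, and since $\gamma \in [0,1)$ it follows that
\begin{equation*}
\|Q_0\|_\infty \le \max\{R_{\max},\, \max_{(s,a)} Q_0(s,a)\} \le \frac{\max\{R_{\max},\, \max_{(s,a)} Q_0(s,a)\}}{1-\gamma} = Q_{\max},
\end{equation*}
so the base case holds. For the inductive step, assume $\|Q_k\|_\infty \le Q_{\max}$. Entries with $(s,a)\neq (s_k,a_k)$ are unchanged, and therefore automatically satisfy the bound by the inductive hypothesis. For the updated entry, I would rewrite the update as the convex combination
\begin{equation*}
Q_{k+1}(s_k,a_k) = (1-\alpha)\, Q_k(s_k,a_k) + \alpha\bigl(r_k + \gamma \max_{u\in {\cal A}} Q_k(s_k',u)\bigr),
\end{equation*}
which is valid because $\alpha \in (0,1)$ by~\cref{assumption:step-size}.

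Taking absolute values and applying the triangle inequality together with the inductive hypothesis and~\cref{assumption:bounded-reward} gives
\begin{equation*}
|Q_{k+1}(s_k,a_k)| \le (1-\alpha)\, Q_{\max} + \alpha\bigl(R_{\max} + \gamma\, Q_{\max}\bigr) = Q_{\max} - \alpha(1-\gamma)\, Q_{\max} + \alpha\, R_{\max}.
\end{equation*}
So the inductive step reduces to the scalar inequality $R_{\max} \le (1-\gamma)\, Q_{\max}$, which is exactly the content of the definition of $Q_{\max}$, namely $Q_{\max} \ge R_{\max}/(1-\gamma)$. Hence $|Q_{k+1}(s_k,a_k)| \le Q_{\max}$ and the induction closes.

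Finally, the bound $Q_{\max} \le 1/(1-\gamma)$ is immediate from \cref{assumption:bounded-reward} ($R_{\max}\le 1$) and~\cref{assumption:bounded-Q0} ($\max_{(s,a)} Q_0(s,a) \le \|Q_0\|_\infty \le 1$). I do not anticipate any genuine obstacle here; the only subtlety to respect is that $\alpha \in (0,1)$ is essential so that the update is a true convex combination, which is why the step-size assumption enters explicitly in the statement of the lemma.
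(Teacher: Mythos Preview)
The paper does not actually prove this lemma; it is quoted from \cite{gosavi2006boundedness} without argument, so there is no in-paper proof to compare against. Your induction strategy is the standard one, and your inductive step is correct.

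However, your base case contains a real gap: the displayed inequality $\|Q_0\|_\infty \le \max\{R_{\max},\, \max_{(s,a)} Q_0(s,a)\}$ does not follow from \cref{assumption:bounded-Q0}. If $Q_0$ has large negative entries---for instance $Q_0 \equiv -1$, $R_{\max}=0$, $\gamma=0$---then $\|Q_0\|_\infty = 1$ while $\max\{R_{\max}, \max_{(s,a)} Q_0(s,a)\} = 0$, and your chain breaks. This is not only a flaw in your write-up: it shows the lemma as printed is not literally correct, and that $\max_{(s,a)} Q_0(s,a)$ should read $\max_{(s,a)} |Q_0(s,a)| = \|Q_0\|_\infty$, which is how the Gosavi bound is usually stated. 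With that correction the base case is trivial and your induction closes exactly as written. Alternatively, since the paper only ever uses the consequence $Q_{\max} \le 1/(1-\gamma)$, you may simply set $Q_{\max} := 1/(1-\gamma)$ from the start under \cref{assumption:bounded-reward} and \cref{assumption:bounded-Q0}; then both the base case $\|Q_0\|_\infty \le 1 \le Q_{\max}$ and the key inductive inequality $R_{\max} \le (1-\gamma)Q_{\max}$ are immediate.
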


%%%%%%%%%%%%%%%%%%%%%%%%%%%%%%%%%%%%%%%%%%%%%%%%%%%%%%%%%%%%%%%%%%%%%%%%%%%%%%%%
\section{Finite-time Analysis of Q-learning from Switching System Theory}\label{sec:convergence}

In this section, we study a discrete-time switching system model of Q-learning in~\cref{algo:standard-Q-learning2}, and establish its finite-time convergence bound based on the stability analysis of switching system.

\subsection{Q-learning as a stochastic affine switching system}
Using the notation introduced, the update in~\cref{algo:standard-Q-learning2} can be rewritten as
\begin{align}
Q_{k+1}=Q_k+\alpha \{DR+\gamma DP\Pi_{Q_k}Q_k-DQ_k +w_{k}\},\label{eq:1}
\end{align}
where
\begin{align}
w_{k}=&(e_{a_k}\otimes e_{s_k} ) r_k+\gamma (e_{a_k}\otimes e_{s_k} )(e_{s_k'})^T \Pi_{Q_k}Q_k\nonumber\\
&-(e_{a_k}\otimes e_{s_k})(e_{a_k}\otimes e_{s_k})^T Q_k\nonumber\\
& -(DR+\gamma DP\Pi_{Q_k}Q_k-DQ_k)\nonumber\\
=& (e_a  \otimes e_s )\delta _k  - (DR + \gamma DP\Pi _{Q_k } Q_k  - DQ_k ),\label{eq:w}
\end{align}
and
\begin{align}
\delta _k : = r_k  + \gamma (e_{s'} )^T \Pi _{Q_k } Q_k  - (e_a  \otimes e_s )^T Q_k\label{eq:td-error}
\end{align}
is the so-called temporal-difference (TD) error~\cite{sutton1988learning}, and $(s_k,a_k,r_k,s_k')$ is the sample in the $k$-th time-step. Note that by definition, the noise term has a zero mean conditioned on $Q_k$, i.e., ${\mathbb E}[w_k|Q_k]=0$.
Recall the definitions $\pi_Q(s)$ and $\Pi_Q$. Invoking the optimal Bellman equation $(\gamma DP\Pi_{Q^*}-D)Q^*+DR=0$,~\eqref{eq:1} can be further rewritten by
\begin{align}
(Q_{k+1} - Q^*) =& \{ I + \alpha (\gamma DP\Pi _{Q_k} - D)\}(Q_k - Q^*)\nonumber\\
&+\gamma DP(\Pi_{Q_k} - \Pi_{Q^*})Q^* + \alpha w_k.\label{eq:Q-learning-stochastic-recursion-form}
\end{align}
which is a linear switching system with an extra affine term, $\gamma DP(\Pi_{Q_k} - \Pi_{Q^*})Q^*$, and a stochastic noise vector, $w_k$. For any $Q \in {\mathbb R}^{|{\cal S}\times {\cal A}|}$, define
\begin{align*}
A_Q :=I + \alpha(\gamma DP\Pi_Q-D),\quad b_Q:=\gamma DP(\Pi_{Q} - \Pi_{Q^*})Q^*.
\end{align*}
Using the notation, the Q-learning iteration can be concisely represented as the \emph{stochastic affine switching system}
\begin{align}
Q_{k + 1}- Q^* = A_{Q_k} (Q_k - Q^*) + b_{Q_k} + \alpha w_k,\label{eq:swithcing-system-form}
\end{align}
where $A_{Q_k} \in {\mathbb R}^{|{\cal S}\times {\cal A}| \times |{\cal S}\times {\cal A}|}$ and $b_{Q_k}\in {\mathbb R}^{|{\cal S}\times {\cal A}|}$ switch among matrices from $\{I + \alpha (\gamma DP\Pi^\pi-D):\pi\in \Theta\}$ and vectors from $\{\gamma DP(\Pi^\pi - \Pi^{\pi^*})Q^* :\pi\in\Theta \}$. In particular, let us define a one-to-one mapping $\varphi :\Theta  \to \{ 1,2, \ldots ,|\Theta |\}$ from a deterministic policy $\pi \in \Theta$ to an integer in $\{ 1,2, \ldots ,|\Theta |\}$, and define
\begin{align}
{A_i} =& I + \alpha (\gamma DP{\Pi ^\pi } - D) \in {\mathbb R}^{|{\cal S}\times {\cal A}| \times |{\cal S} \times {\cal A}|},\label{eq:12}\\
{b_i} =& \gamma DP(\Pi^\pi - \Pi^{\pi^*})Q^* \in {\mathbb R}^{|{\cal S}\times {\cal A}|},\nonumber
\end{align}
for all $i = \varphi (\pi )$ and $\pi  \in \Theta$. Then,~\eqref{eq:swithcing-system-form} can be written by the affine switching system~\eqref{eq:affine-switching-system} with the switching signal ${\sigma _k} \in \{ 1,2, \ldots ,|\Theta |\}$ at time $k\geq 0$  determined by ${\sigma _k} = \varphi ({\pi _k})$ with ${\pi _k}(\cdot): = \arg {\min _{a \in A}}{Q_k}( \cdot ,a) \in \Theta$.

Consequently, the convergence of Q-learning is now reduced to analyzing the stability of the above switching system.
A main obstacle in proving the stability  arises from the presence of the affine and stochastic terms. Without these terms, we can easily establish the exponential stability of the corresponding deterministic switching system, under arbitrary switching policy. Specifically, we have the following result.
\begin{proposition}[\cite{lee2021discrete}]\label{prop:stability}
For arbitrary $H_k \in {\mathbb R}^{|{\cal S}\times {\cal A}|}, k\ge 0$, the linear switching system
\begin{align*}
Q_{k+1} - Q^* &= A_{H_k} (Q_k - Q^*),\quad  Q_0 - Q^*\in {\mathbb R}^{|{\cal S}||{\cal A}|},
\end{align*}
is exponentially stable with $\|Q_k- Q^*\|_\infty\le \rho ^k \|Q_0 - Q^*\|_\infty, k \ge 0$, where $\rho$ is defined in~\eqref{eq:rho}.
\end{proposition}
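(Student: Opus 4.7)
The plan is to reduce the exponential-stability claim to the single-step bound $\|A_H\|_\infty \le \rho$ holding uniformly in $H$, where $\|\cdot\|_\infty$ denotes the matrix norm induced by the vector $\infty$-norm. Once that uniform bound is in hand, submultiplicativity of the induced norm gives
\[
\|Q_k - Q^*\|_\infty \le \prod_{j=0}^{k-1} \|A_{H_j}\|_\infty \cdot \|Q_0 - Q^*\|_\infty \le \rho^k \|Q_0 - Q^*\|_\infty
\]
for \emph{every} switching sequence $\{H_j\}$, so the whole problem collapses to a structural statement about a single matrix $A_H = I + \alpha(\gamma DP\Pi^{\pi_H} - D)$ for an arbitrary deterministic policy $\pi_H$.

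To bound $\|A_H\|_\infty$ by $\rho$, the strategy is to show that $A_H$ is (i) entry-wise non-negative and (ii) has every row sum at most $\rho$; for non-negative matrices, the induced $\infty$-norm equals the maximum row sum, which then yields the result. For (i), the diagonal matrix $I - \alpha D$ has strictly positive diagonal $1 - \alpha d(s,a) > 0$ (using $\alpha, d(s,a) \in (0,1)$) and zero off-diagonal, while $\alpha\gamma DP\Pi^{\pi_H}$ is entry-wise non-negative because $D$, $P$, and $\Pi^{\pi_H}$ all have non-negative entries. Adding them gives $A_H \ge 0$ component-wise.

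For (ii), the key observation is that $P\Pi^{\pi_H}$ is row-stochastic. Each row of $\Pi^{\pi_H}$ sums to one because $\pi_H(s)$ is a probability distribution over actions; each row of $P$ sums to one by definition of a transition kernel; and a direct Kronecker-product computation (which recovers the state-action transition kernel under $\pi_H$) verifies that the rows of $P\Pi^{\pi_H}$ sum to one as well. Consequently row $(s,a)$ of $DP\Pi^{\pi_H}$ sums to $d(s,a)$, and therefore row $(s,a)$ of $A_H$ sums to
\[
1 - \alpha d(s,a) + \alpha\gamma d(s,a) = 1 - \alpha d(s,a)(1-\gamma) \le 1 - \alpha d_{\min}(1-\gamma) = \rho.
\]
Combined with non-negativity, this delivers $\|A_H\|_\infty \le \rho$, completing the reduction.

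The main obstacle is not analytical but notational: one must carefully track the Kronecker-product bookkeeping defining $\Pi^{\pi_H}$ to confirm that $P\Pi^{\pi_H}$ is genuinely row-stochastic and that the diagonal matrix $D$ multiplies it in the correct order so that row $(s,a)$ uniformly picks up the factor $d(s,a)$. Once the matrix structure is unwound, the contraction is really just a reflection of the standard Bellman $\gamma$-contraction at the coordinate level: the Q-learning update is a convex combination between the current iterate and a discounted backup, weighted by the visitation probability, so the worst-case per-step contraction factor is precisely $1 - \alpha d_{\min}(1-\gamma)$, independent of the switching signal.
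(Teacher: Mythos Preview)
Your proposal is correct and takes essentially the same approach as the paper: the paper states that the proposition ``follows immediately from the key fact that $\|A_Q\|_\infty \le \rho$'' (its \cref{lemma:max-norm-system-matrix}), and then iterates via submultiplicativity, exactly as you do. You additionally supply the row-sum computation establishing $\|A_H\|_\infty \le \rho$ (which the paper defers to the cited reference) and the nonnegativity of $A_H$ (which the paper records separately as \cref{lemma:nonnegative-matrix}), so your write-up is in fact a self-contained version of the paper's argument.
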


The above result follows immediately from the key fact that $\|A_Q \|_\infty \le \rho$, which we formally state in the lemma below.

\begin{lemma}[\cite{lee2021discrete}]\label{lemma:max-norm-system-matrix}
For any $Q \in {\mathbb R}^{|{\cal S}\times {\cal A}|}$, $\|A_Q \|_\infty \le \rho$, where $\| A \|_\infty :=\max_{1\le i \le m} \sum_{j=1}^n {|A_{ij} |}$ and $A_{ij}$ is the element of $A$ in $i$-th row and $j$-th column.
\end{lemma}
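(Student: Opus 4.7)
The plan is to unpack the definition of $A_Q = I + \alpha(\gamma DP\Pi_Q - D)$ componentwise and exploit the stochasticity of the constituent matrices. The key observation that makes the $\infty$-norm tractable is that $A_Q$ is entrywise nonnegative, so the absolute values in the definition of $\|\cdot\|_\infty$ drop out and the row sums can be computed directly.

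First I would verify nonnegativity. Writing $d_i$ for the $i$-th diagonal entry of $D$ (which equals $d(s,a)$ for the appropriate state-action pair), the diagonal entries of $A_Q$ are $(1 - \alpha d_i) + \alpha\gamma d_i (P\Pi_Q)_{ii}$, which is nonnegative because $\alpha \in (0,1)$ and $d_i \in (0,1)$ give $1 - \alpha d_i > 0$, while $\gamma \geq 0$ and $P\Pi_Q$ has nonnegative entries. The off-diagonal entries are $\alpha\gamma d_i (P\Pi_Q)_{ij} \geq 0$ for the same reasons. Hence $|(A_Q)_{ij}| = (A_Q)_{ij}$ for all $i,j$.

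Next I would compute the $i$-th row sum. Using that each row of $P$ sums to $1$ (since $P$ stacks transition-probability matrices) and each row of $\Pi_Q$ sums to $1$ (since $\pi_Q$ is a deterministic policy encoded as one-hot vectors), the product $P\Pi_Q$ is row-stochastic. Therefore
\begin{align*}
\sum_{j} (A_Q)_{ij} = (1 - \alpha d_i) + \alpha\gamma d_i \sum_{j} (P\Pi_Q)_{ij} = 1 - \alpha d_i(1-\gamma).
\end{align*}
Since $1-\gamma > 0$ and $d_i \geq d_{\min}$, this is bounded above by $1 - \alpha d_{\min}(1-\gamma) = \rho$. Taking the maximum over $i$ yields $\|A_Q\|_\infty \leq \rho$.

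No step is really an obstacle here; the content is essentially a bookkeeping exercise once one notices that $A_Q$ can be read as a convex-combination-like object. The only place one must be slightly careful is confirming that $P\Pi_Q$ really is row-stochastic in the stacked block notation defined earlier, i.e., checking the dimensions $P \in \mathbb{R}^{|\mathcal{S}\times\mathcal{A}|\times|\mathcal{S}|}$ and $\Pi_Q \in \mathbb{R}^{|\mathcal{S}|\times|\mathcal{S}\times\mathcal{A}|}$ and that the $(s,a)$-indexed row of $P\Pi_Q$ encodes the distribution of the next state-action pair under the greedy policy $\pi_Q$. This is routine from the definitions in~\eqref{eq:swtiching-matrix}, and the rest of the argument follows immediately.
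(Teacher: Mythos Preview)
Your argument is correct. Note, however, that the paper does not actually supply a proof of this lemma here; it is quoted from~\cite{lee2021discrete} and stated without proof. Your proof is the natural one, and its first half (entrywise nonnegativity of $A_Q$) is exactly what the paper records separately as~\cref{lemma:nonnegative-matrix}; the second half (row-stochasticity of $P\Pi_Q$ giving row sums $1-\alpha d_i(1-\gamma)\le\rho$) is the standard completion and matches what one finds in~\cite{lee2021discrete}.
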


However, because of the additional affine term and stochastic noises in the original switching system~\eqref{eq:swithcing-system-form}, it is not obvious how to directly derive its finite-time convergence bound. To circumvent the difficulty with the affine term, we will resort to two simpler comparison systems, whose trajectories upper and lower bound that of the original system, and can be more easily analyzed. These systems will be called the upper and lower comparison systems, which capture important behaviors of Q-learning. The upper comparison system, denoted by $Q_k^U$, upper bounds Q-learning iterate $Q_k$, while the lower comparison system, denoted by $Q_k^L$, lower bounds $Q_k$. The construction of these comparison systems is partly inspired by~\cite{lee2020unified} and exploits the special structure of the Q-learning algorithm. Unlike~\cite{lee2020unified}, here we focus on the discrete-time domain and a finite-time analysis. To address the difficulty with the stochastic noise, we introduce a two-phase analysis: the first phase captures the noise effect of the lower comparison system, while the second phase captures the difference between the two comparison systems when noise effect vanishes.

\begin{remark}
When an $\varepsilon$-greedy strategy is utilized for the behavior policy, it results in the behavior policy that is time-varying and depends on the current Q-iterate, $Q_k$. This implies that the matrix $D$ in~\eqref{eq:Q-learning-stochastic-recursion-form} becomes a time-varying matrix depending on $Q_k$, introducing further nonlinearity and probabilistic dependencies within the switching system dynamics in~\eqref{eq:Q-learning-stochastic-recursion-form}. Consequently, it is not feasible to straightforwardly extend the proposed analysis to Q-learning with the $\varepsilon$-greedy behavior policy. A more extensive analysis is necessary for this extension, which is left as a subject for future research endeavors.
\end{remark}

Before closing this section, we present the following result which will be useful throughout the paper.
\begin{lemma}\label{lemma:nonnegative-matrix}
For any $Q \in {\mathbb R}^{|{\cal S}\times {\cal A}|}$, $A_Q$ is a nonnegative matrix (all entries are nonnegative).
\end{lemma}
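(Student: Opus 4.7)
The plan is to decompose $A_Q$ into a sum of two manifestly nonnegative matrices. Writing
\[
A_Q \;=\; I + \alpha(\gamma DP\Pi_Q - D) \;=\; (I - \alpha D) \;+\; \alpha\gamma\, DP\Pi_Q,
\]
I would argue the two summands separately and then use that a sum of nonnegative matrices is nonnegative.

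For the first summand, $I-\alpha D$ is diagonal, and by the definition of $D$ its $(s,a)$-th diagonal entry is $1-\alpha\, d(s,a)$. Under \cref{assumption:step-size} we have $\alpha\in(0,1)$, and by construction $d(s,a)\in(0,1)$, so $\alpha\, d(s,a)\in(0,1)$ and hence $1-\alpha\, d(s,a)\in(0,1)$. All off-diagonal entries vanish, so $I-\alpha D$ is (entrywise) nonnegative.

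For the second summand, $\gamma>0$ and $\alpha>0$, so it suffices to verify that $DP\Pi_Q$ has nonnegative entries. This is immediate because $D$ is diagonal with nonnegative entries, $P$ is a stochastic matrix (hence entrywise nonnegative), and $\Pi_Q=\Pi^{\pi_Q}$ is formed by stacking rows of the form $\vec{\pi}_Q(s)^T\otimes e_s^T$, which are nonnegative one-hot vectors by definition~\eqref{eq:swtiching-matrix}. The product of nonnegative matrices is nonnegative, so $\alpha\gamma DP\Pi_Q\ge 0$ entrywise.

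Combining the two pieces yields $A_Q\ge 0$ entrywise, as claimed. There is no real obstacle here; the only subtlety is the bound $\alpha d(s,a)<1$ needed to keep the diagonal of $I-\alpha D$ nonnegative, which is automatic from \cref{assumption:step-size} and the fact that each $d(s,a)$ is a probability mass. The result will be used later to manipulate inequalities entrywise (e.g., preserving order when multiplying bounds by $A_{Q_k}$), which is why it is recorded as a standalone lemma.
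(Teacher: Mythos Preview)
Your proof is correct and follows essentially the same approach as the paper: both decompose $A_Q=(I-\alpha D)+\alpha\gamma DP\Pi_Q$ and observe that each summand is entrywise nonnegative. You simply spell out in more detail why $I-\alpha D$ and $DP\Pi_Q$ are nonnegative, whereas the paper asserts this directly.
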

\begin{proof}
Recalling the definition $A_Q :=I + \alpha(\gamma DP\Pi_Q-D)$, one can easily see that for any $i,j \in \{ 1,2, \ldots ,|{\cal S}\times {\cal A}|\} $, we have ${[{A_Q}]_{ij}} = {[I - \alpha D + \alpha \gamma DP{\Pi _Q}]_{ij}} = {[I - \alpha D]_{ij}} + \alpha \gamma {[DP{\Pi _Q}]_{ij}} \ge 0$, where ${[ \cdot ]_{ij}}$ denotes the element of a matrix $[ \cdot ]$ in the $i$th row and $j$th column, and the inequality follows from the fact that both $I - \alpha D$ and $DP{\Pi _Q}$ are nonnegative matrices. This completes the proof.
\end{proof}

\subsection{Lower comparison system}
Let us consider the stochastic linear system~\cite{lee2021discrete}
\begin{align}
Q_{k+1}^L- Q^*=A_{Q^*} (Q_k^L - Q^*) + \alpha w_{k},\quad  Q_{0}^L-Q^* \in {\mathbb R}^{|{\cal S}\times {\cal A}|},\label{eq:lower-system}
\end{align}
 where the stochastic noise $w_{k}$ is the same as the original system~\eqref{eq:Q-learning-stochastic-recursion-form}. We call it the \emph{lower comparison system}.
%  Its main property is that if $Q_0^L  - Q^*\le Q_0-Q^*$ initially, then $Q_k^L - Q^*\le Q_k-Q^*$ for all $k \ge 0$.
\begin{proposition}[{\cite{lee2021discrete}}]\label{prop:lower-bound2}
Suppose $Q_0^L- Q^*\le Q_0 - Q^*$, where $\le$ is used as the element-wise inequality. Then, $Q_k^L- Q^*\le Q_k-Q^*$ for all $k\geq0$.
\end{proposition}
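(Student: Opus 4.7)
The plan is to proceed by induction on $k$. The base case $k = 0$ is given by hypothesis. For the inductive step, assume $Q_k^L - Q^* \le Q_k - Q^*$ element-wise, and compare the two recursions
\begin{align*}
Q_{k+1} - Q^* &= A_{Q_k}(Q_k - Q^*) + b_{Q_k} + \alpha w_k,\\
Q_{k+1}^L - Q^* &= A_{Q^*}(Q_k^L - Q^*) + \alpha w_k.
\end{align*}
Since the stochastic term $\alpha w_k$ is common to both systems, it suffices to show that $A_{Q_k}(Q_k - Q^*) + b_{Q_k} \ge A_{Q^*}(Q_k^L - Q^*)$ element-wise.

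The argument has two ingredients. First, I would combine the switching matrix and the affine term in the original dynamics using the optimality of the greedy policy. Expanding gives
\begin{align*}
A_{Q_k}(Q_k - Q^*) + b_{Q_k} = (I - \alpha D)(Q_k - Q^*) + \alpha\gamma DP\bigl(\Pi_{Q_k}Q_k - \Pi_{Q^*}Q^*\bigr).
\end{align*}
By the definition of the greedy policy $\Pi_{Q_k}$, the vector $\Pi_{Q_k}Q_k$ has entries $\max_a Q_k(s,a)$, so $\Pi_{Q_k}Q_k \ge \Pi_{Q^*}Q_k$ element-wise. Since $DP$ is a nonnegative matrix, this inequality is preserved under left multiplication, yielding
\begin{align*}
A_{Q_k}(Q_k - Q^*) + b_{Q_k} \ge (I - \alpha D)(Q_k - Q^*) + \alpha\gamma DP\Pi_{Q^*}(Q_k - Q^*) = A_{Q^*}(Q_k - Q^*).
\end{align*}

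Second, I would invoke the induction hypothesis $Q_k - Q^* \ge Q_k^L - Q^*$ together with the nonnegativity of $A_{Q^*}$ established in \cref{lemma:nonnegative-matrix}, which gives $A_{Q^*}(Q_k - Q^*) \ge A_{Q^*}(Q_k^L - Q^*)$. Chaining the two inequalities and adding $\alpha w_k$ to both sides yields $Q_{k+1} - Q^* \ge Q_{k+1}^L - Q^*$, closing the induction.

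The only conceptually delicate step is the first one: the two systems have different switching matrices ($A_{Q_k}$ vs.\ $A_{Q^*}$), and the extra affine term $b_{Q_k}$ must be absorbed precisely to re-express the difference as a comparison through the common matrix $A_{Q^*}$. This works because $b_{Q_k}$ encodes exactly the greedy-policy gap $\gamma DP(\Pi_{Q_k} - \Pi_{Q^*})Q^*$, which combines with $\alpha\gamma DP\Pi_{Q_k}(Q_k - Q^*)$ inside $A_{Q_k}(Q_k - Q^*)$ to form $\alpha\gamma DP(\Pi_{Q_k}Q_k - \Pi_{Q^*}Q^*)$, at which point greedy optimality applies cleanly. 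Everything else is routine monotonicity using nonnegativity of $DP$ and of $A_{Q^*}$.
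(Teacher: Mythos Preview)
Your proof is correct and follows essentially the same approach as the paper: an induction argument that reduces to showing $A_{Q_k}(Q_k-Q^*)+b_{Q_k}\ge A_{Q^*}(Q_k-Q^*)$ via greedy optimality $\Pi_{Q_k}Q_k\ge\Pi_{Q^*}Q_k$, followed by the induction hypothesis and nonnegativity of $A_{Q^*}$. The only cosmetic difference is that the paper writes the key identity as $A_{Q_k}(Q_k-Q^*)+b_{Q_k}=A_{Q^*}(Q_k-Q^*)+\alpha\gamma DP(\Pi_{Q_k}-\Pi_{Q^*})Q_k$ and drops the last (nonnegative) term directly, whereas you expand through $(I-\alpha D)$ first---the algebra and the two key facts used are identical.
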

\begin{proof}
The proof is done by an induction argument. Suppose the result holds for some $k \ge 0$. Then, we have
\begin{align*}
(Q_{k+1}- Q^*)=&A_{Q^*} (Q_k-Q^*)+(A_{Q_k}-A_{Q^*}) (Q_k-Q^*)\\
&+b_{Q_k}+ \alpha w_k\\
=& A_{Q^*} (Q_k-Q^*) + \alpha\gamma DP(\Pi_{Q_k}-\Pi_{Q^*})Q_k \\
&+\alpha w_k\ge A_{Q^*} (Q_k-Q^*) + \alpha w_k\\
\ge& A_{Q^*} (Q_k^L-Q^*) + \alpha w_k\\
=& Q_{k+1}^L-Q^*,
\end{align*}
where the first inequality is due to $DP(\Pi_{Q_k}- \Pi_{Q^*})Q_k\ge DP(\Pi_{Q^*}-\Pi_{Q^*})Q_k=0$ and the second inequality is due to the hypothesis $Q_k^L- Q^*\le Q_k-Q^*$ and the fact that $A_{Q^*}$ is a nonnegative matrix (all elements are nonnegative). The proof is completed by induction.
\end{proof}

Defining $x_k := Q_k^L  - Q^*$ and $A:= A_{Q^*}$,~\eqref{eq:lower-system} can be concisely represented as the \emph{stochastic linear system}
\begin{align}
x_{k + 1} = A x_k + \alpha w_k,\quad x_0 \in {\mathbb R}^n,\quad \forall k \geq 0,\label{eq:linear-system-form}
\end{align}
where $n:= |{\cal S} \times {\cal A}|$, and $w_k\in {\mathbb R}^n$ is a stochastic noise. The noise $w_k$ has the zero mean, and is bounded. It is formally proved in the following lemma.
\begin{lemma}\label{lemma:bound-W}
We have
\begin{enumerate}
\item ${\mathbb E}[w_k] = 0$;

\item ${\mathbb E}[\left\| {w_k } \right\|_\infty  ] \le \sqrt{W_{\max }}$;

\item ${\mathbb E}[\left\| {w_k } \right\|_2 ] \le \sqrt{W_{\max }}$;

\item ${\mathbb E}[w_k^T w_k ] \le \frac{9}{{(1 - \gamma )^2 }} = :W_{\max }$.
\end{enumerate}
for all $k\geq 0$.
\end{lemma}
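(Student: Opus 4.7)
My plan is to prove the four claims in the order (1), (4), (3), (2), since the last two follow from (4) by Jensen's inequality and a standard norm comparison, while (4) carries the essential bound. The setup is clean: the definition of $w_k$ in~\eqref{eq:w} already subtracts off what will turn out to be the conditional mean, so (1) reduces to verifying that identification.

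For (1), I will compute ${\mathbb E}[(e_{a_k}\otimes e_{s_k})\delta_k \mid Q_k]$ componentwise and check it equals $DR + \gamma DP\Pi_{Q_k}Q_k - DQ_k$. The $(s,a)$-entry of $(e_{a_k}\otimes e_{s_k})\delta_k$ is $\mathbf{1}[s_k=s,a_k=a]\delta_k$, so using $d(s,a)=p(s)\beta(a|s)$ and the fact that $s_k'\sim P(s,a,\cdot)$ given $(s_k,a_k)=(s,a)$, the conditional expectation becomes $d(s,a)\bigl[R(s,a)+\gamma\sum_{s'}P(s,a,s')\max_u Q_k(s',u)-Q_k(s,a)\bigr]$. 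Since $[\Pi_{Q_k}Q_k]_{s'}=\max_u Q_k(s',u)$, this matches the $(s,a)$-entry of $DR+\gamma DP\Pi_{Q_k}Q_k-DQ_k$ exactly. Thus ${\mathbb E}[w_k\mid Q_k]=0$ and the tower property yields (1).

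For (4), I will first uniformly bound the scalar TD-error $\delta_k$ defined in~\eqref{eq:td-error}. Using \cref{assumption:bounded-reward} together with \cref{lemma:bounded-Q} (so $\|Q_k\|_\infty\le 1/(1-\gamma)$), and bounding $|r_k|\le 1\le 1/(1-\gamma)$, $|\gamma(e_{s_k'})^T\Pi_{Q_k}Q_k|\le \gamma/(1-\gamma)$, and $|(e_{a_k}\otimes e_{s_k})^T Q_k|\le 1/(1-\gamma)$, the triangle inequality gives $|\delta_k|\le (2+\gamma)/(1-\gamma)\le 3/(1-\gamma)$. Now writing $v_1:=(e_{a_k}\otimes e_{s_k})\delta_k$, I have $w_k=v_1-{\mathbb E}[v_1\mid Q_k]$ by the argument in (1); summing per-coordinate variances yields ${\mathbb E}[\|w_k\|_2^2\mid Q_k]\le {\mathbb E}[\|v_1\|_2^2\mid Q_k]={\mathbb E}[\delta_k^2\mid Q_k]$, where I use $\|e_{a_k}\otimes e_{s_k}\|_2=1$. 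Taking outer expectation and invoking the uniform bound on $\delta_k^2$ yields ${\mathbb E}[w_k^Tw_k]\le 9/(1-\gamma)^2=W_{\max}$, which is (4).

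Finally, (3) is immediate from Jensen: ${\mathbb E}[\|w_k\|_2]\le \sqrt{{\mathbb E}[\|w_k\|_2^2]}\le \sqrt{W_{\max}}$, and (2) follows by combining $\|w_k\|_\infty\le \|w_k\|_2$ inside the expectation with (3). I expect no real obstacle: the only subtle point is the componentwise bookkeeping in (1), where one must correctly reconcile the Kronecker-product indexing $(e_{a_k}\otimes e_{s_k})$ with the block structure of $D$, $P$, and $\Pi_{Q_k}$; once that matches, the rest is a short chain of triangle, variance, and Jensen inequalities, with the slight looseness in the bound $|\delta_k|\le 3/(1-\gamma)$ (rather than the tighter $2/(1-\gamma)$) absorbed into the factor $9$ in $W_{\max}$.
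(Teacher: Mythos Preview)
Your proposal is correct and follows essentially the same approach as the paper. Both arguments reduce ${\mathbb E}[\|w_k\|_2^2\mid Q_k]$ to ${\mathbb E}[\delta_k^2\mid Q_k]$ via the variance identity (the paper writes out the exact equality ${\mathbb E}[\delta_k^2\mid Q_k]-\|DR+\gamma DP\Pi_{Q_k}Q_k-DQ_k\|_2^2$, you phrase it as ``variance $\le$ second moment'' coordinatewise), and both then bound $\delta_k^2\le 9/(1-\gamma)^2$ using~\cref{assumption:bounded-reward} and~\cref{lemma:bounded-Q}; the only cosmetic difference is that you bound $|\delta_k|\le 3/(1-\gamma)$ first and then square, whereas the paper expands $\delta_k^2$ term by term.
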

\begin{proof}
For the first statement, we take the conditional expectation on~\eqref{eq:w} to have ${\mathbb E}[w_k |x_k ] = 0$. Taking the total expectation again with the law of total expectation leads to the first conclusion. Moreover, the conditional expectation, ${\mathbb E}[w_k^T w_k |Q_k ]$, is bounded as
\begin{align*}
&{\mathbb E}[w_k^T w_k\, |Q_k ]\\
 =& {\mathbb E}[\left\| {w_k } \right\|_2^2\, |Q_k ]\\
=& {\mathbb E}[\left\| {(e_{a_k}  \otimes e_{s_k} )\delta _k  - (DR + \gamma DP\Pi _{Q_k } Q_k  - DQ_k )} \right\|_2^2\, |Q_k ]\\
=& {\mathbb E}[\delta _k^2\, |Q_k ] - \left\| {DR + \gamma DP\Pi _{Q_k } Q_k  - DQ_k } \right\|_2^2\\
\le& {\mathbb E}[\delta _k^2\, |Q_k ]\\
=& {\mathbb E}[r_k^2\, |Q_k ] + {\mathbb E}[2r_k \gamma (e_{s'} )^T \Pi _{Q_k } Q_k\, |Q_k ]\\
&+ {\mathbb E}[ - 2r_k (e_{a_k}  \otimes e_{s_k} )^T Q_k\, |Q_k ]\\
&+ {\mathbb E}[ - 2\gamma (e_{s_k'} )^T \Pi _{Q_k } Q_k (e_{a_k}  \otimes e_{s_k} )^T Q_k\, |Q_k ]\\
&+ {\mathbb E}[\gamma (e_{s_k'} )^T \Pi _{Q_k } Q_k \gamma (e_{s_k'} )^T \Pi _{Q_k } Q_k\, |Q_k ]\\
& + {\mathbb E}[(e_{a_k}  \otimes e_{s_k} )^T Q_k (e_{a_k}  \otimes e_{s_k} )^T Q_k \, |Q_k ]\\
\le& 1 + 2\gamma {\mathbb E}[|r_k |\times |(e_{s_k'} )^T \Pi _{Q_k } Q_k |\,|Q_k ]\\
& + 2{\mathbb E}[|r_k |\times  |(e_{a_k}  \otimes e_{s_k} )^T Q_k |\,|Q_k ]\\
&+ 2\gamma {\mathbb E}[|(e_{s_k'} )^T \Pi _{Q_k } Q_k |\times |(e_{a_k}  \otimes e_{s_k} )^T Q_k |\,|Q_k ]\\
& + \gamma ^2 {\mathbb E}[|(e_{s_k'} )^T \Pi _{Q_k } Q_k |\times |(e_{s_k'} )^T \Pi _{Q_k } Q_k |\,|Q_k ]\\
&+ {\mathbb E}[|(e_{a_k}  \otimes e_{s_k} )^T Q_k |\times |(e_{a_k}  \otimes e_{s_k} )^T Q_k |\,|Q_k ]\\
\le& \frac{9}{{(1 - \gamma )^2 }}=:W_{\max},
\end{align*}
where $\delta_k$ is defined in~\eqref{eq:td-error}, and the last inequality comes from Assumptions~\ref{assumption:bounded-reward}-\ref{assumption:bounded-Q0}, and~\cref{lemma:bounded-Q}. Taking the total expectation, we have the fourth result. Next, taking the square root on both sides of ${\mathbb E}[\left\| {w_k } \right\|_2^2 ] \le W_{\max}$, one gets ${\mathbb E}[\left\| {w_k } \right\|_\infty  ] \le {\mathbb E}[\left\| {w_k } \right\|_2 ] \le \sqrt {{\mathbb E}[\left\| {w_k } \right\|_2^2 ]}  \le \sqrt {W_{\max}}$, where the first inequality comes from $\left\|  \cdot  \right\|_\infty   \le \left\|  \cdot  \right\|_2$. This completes the proof.
\end{proof}

To proceed further, let us define the covariance of the noise
\[
{\mathbb E}[w_k w_k^T ] = :W_k= W_k^T \succeq 0.
\]
An important quantity we use in the main result is the maximum eigenvalue, $\lambda _{\max } (W_k)$, whose bound can be easily established as follows.
\begin{lemma}\label{lemma:bound-W2}
The maximum eigenvalue of $W_k$ is bounded as
\begin{align*}
\lambda _{\max } (W_k) \le W_{\max}
\end{align*}
for all $k \geq 0$, where $W_{\max} > 0$ is given in~\cref{lemma:bound-W}.
\end{lemma}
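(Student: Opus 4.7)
The plan is to exploit the standard linear-algebraic fact that for a symmetric positive semidefinite matrix the largest eigenvalue is dominated by the trace, and then convert that trace into a quantity we have already bounded in \cref{lemma:bound-W}.

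First, I would observe that $W_k = \mathbb{E}[w_k w_k^T]$ is symmetric positive semidefinite by construction, so all of its eigenvalues are nonnegative and satisfy $\lambda_{\max}(W_k) \le \sum_{i} \lambda_i(W_k) = \mathrm{tr}(W_k)$. Next, I would use the cyclic property of the trace together with linearity of expectation to rewrite
\begin{align*}
\mathrm{tr}(W_k) = \mathrm{tr}(\mathbb{E}[w_k w_k^T]) = \mathbb{E}[\mathrm{tr}(w_k w_k^T)] = \mathbb{E}[\mathrm{tr}(w_k^T w_k)] = \mathbb{E}[w_k^T w_k].
\end{align*}

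Finally, I would invoke the fourth item of \cref{lemma:bound-W}, which gives $\mathbb{E}[w_k^T w_k] \le W_{\max}$, to conclude $\lambda_{\max}(W_k) \le W_{\max}$ for every $k \ge 0$.

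There is essentially no obstacle here: once the PSD property of $W_k$ is noted, the proof is a two-line chain consisting of the eigenvalue-versus-trace inequality, a trace rearrangement, and a direct application of the previous lemma. The only care needed is to state the PSD property of $W_k$ explicitly (it follows because $v^T W_k v = \mathbb{E}[(v^T w_k)^2] \ge 0$ for every $v$), so that the inequality $\lambda_{\max}(W_k) \le \mathrm{tr}(W_k)$ is justified.
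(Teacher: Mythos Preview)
Your proposal is correct and matches the paper's proof essentially line for line: the paper also uses $\lambda_{\max}(W_k) \le \mathrm{tr}(W_k) = \mathrm{tr}(\mathbb{E}[w_k w_k^T]) = \mathbb{E}[\mathrm{tr}(w_k w_k^T)] = \mathbb{E}[w_k^T w_k] \le W_{\max}$, invoking linearity of the trace and \cref{lemma:bound-W}. Your explicit justification of the PSD property of $W_k$ is a welcome addition that the paper leaves implicit.
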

\begin{proof}
The proof is completed by noting $\lambda _{\max } (W_k) \le {\rm tr}(W_k) = {\rm tr}({\mathbb E}[w_k w_k^T ]) = {\mathbb E}[{\rm tr}(w_k w_k^T )] = {\mathbb E}[w_k^T w_k ] \le W_{\max}$, where the last inequality comes from~\cref{lemma:bound-W}, and the second equality uses the fact that the trace is a linear function. This completes the proof.
\end{proof}

As a next step, we investigate how the autocorrelation matrix, ${\mathbb E}[x_k x_k^T ]$, propagates over the time. In particular, the autocorrelation matrix is updated through the linear recursion
\[
{\mathbb E}[x_{k + 1} x_{k + 1}^T ] = A {\mathbb E}[x_k x_k^T ]A^T  + \alpha ^2 W_k,
\]
where ${\mathbb E}[w_k w_k^T ] = W_k$. Defining $X_k := {\mathbb E}[x_k x_k^T ], k \geq 0$, it is equivalently written as
\begin{align}
X_{k + 1}  = AX_k A^T  + \alpha^2 W_k,\quad \forall k\geq 0,\label{eq:correlation-update}
\end{align}
with $X_0  := x_0x_0^T$. The following lemma proves the fact that the trace of $X_k$ is bounded, which will be used for the main development.
\begin{lemma}[Bounded trace]\label{lemma:basic1}
We have the following bound:
\[
{\rm tr} ( X_k) \le \frac{{9 n^2 \alpha }}{{d_{\min } (1 - \gamma )^3 }} + \| x_0\|_2^2 n^2 \rho ^{2k}.
\]
\end{lemma}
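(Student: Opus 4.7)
The plan is to unroll the linear recursion~\eqref{eq:correlation-update} and bound each resulting trace term using the row-sum estimate $\|A\|_\infty \le \rho$ from~\cref{lemma:max-norm-system-matrix} together with entrywise nonnegativity of $A$ from~\cref{lemma:nonnegative-matrix}. Iterating~\eqref{eq:correlation-update} gives
\[
X_k = A^k X_0 (A^T)^k + \alpha^2 \sum_{j=0}^{k-1} A^{k-1-j} W_j (A^T)^{k-1-j},
\]
so taking the trace splits the desired bound into a deterministic decaying contribution $\|A^k x_0\|_2^2$ and a stochastic accumulation contribution.

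The key auxiliary estimate I would establish is $\|A^m\|_F^2 \le n^2 \rho^{2m}$ for every $m \ge 0$. The argument goes as follows: by~\cref{lemma:nonnegative-matrix}, $A$ is entrywise nonnegative, and hence so is $A^m$; by submultiplicativity of $\|\cdot\|_\infty$ combined with~\cref{lemma:max-norm-system-matrix}, the maximum row sum of $A^m$ is at most $\rho^m$; therefore each of the $n^2$ entries of $A^m$ is nonnegative and upper bounded by its row sum, which is at most $\rho^m$, so the sum of their squares is at most $n^2 \rho^{2m}$.

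With this in hand, the deterministic part is ${\rm tr}(A^k X_0 (A^T)^k) = \|A^k x_0\|_2^2 \le \|A^k\|_F^2 \|x_0\|_2^2 \le n^2 \rho^{2k}\|x_0\|_2^2$, matching the second summand in the claim. For each stochastic term I would use the PSD domination $W_j \preceq \lambda_{\max}(W_j) I \preceq W_{\max} I$ (\cref{lemma:bound-W2}) to write ${\rm tr}(A^{k-1-j} W_j (A^T)^{k-1-j}) \le W_{\max} \|A^{k-1-j}\|_F^2 \le W_{\max} n^2 \rho^{2(k-1-j)}$. Summing the geometric series via $\sum_{m\ge 0}\rho^{2m} \le 1/(1-\rho) = 1/(\alpha d_{\min}(1-\gamma))$ and plugging in $W_{\max} = 9/(1-\gamma)^2$ yields exactly $9 n^2 \alpha/(d_{\min}(1-\gamma)^3)$ for the stochastic contribution, producing the stated bound.

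The main obstacle is the Frobenius estimate $\|A^m\|_F^2 \le n^2 \rho^{2m}$: this is precisely where entrywise nonnegativity of $A$ becomes indispensable, since $\|A\|_\infty \le \rho$ alone only controls row sums and not individual entries, and standard induction on operator norms does not directly produce a Frobenius bound with the right $\rho^{2m}$ decay. Once that estimate is in place, the remainder of the argument is a routine unrolling plus a trace-PSD inequality and a geometric-series calculation.
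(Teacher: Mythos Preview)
Your argument is correct and produces exactly the stated bound, but it takes a different route from the paper. The paper never works with the Frobenius norm: instead it bounds $\lambda_{\max}(X_k)$ term-by-term via $\lambda_{\max}(A^i W_{k-i-1}(A^T)^i) \le \lambda_{\max}(W_{k-i-1})\|A^i\|_2^2 \le W_{\max}\, n\,\|A^i\|_\infty^2 \le W_{\max}\, n\,\rho^{2i}$, using only the standard norm equivalence $\|\cdot\|_2 \le \sqrt{n}\,\|\cdot\|_\infty$, and then converts to a trace bound at the end by ${\rm tr}(X_k) \le n\,\lambda_{\max}(X_k)$. This picks up the same factor $n^2$ as your Frobenius estimate. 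In particular, the paper's proof does not invoke \cref{lemma:nonnegative-matrix} at all, so your remark that entrywise nonnegativity is ``indispensable'' overstates the situation: the same Frobenius bound $\|A^m\|_F^2 \le n\,\|A^m\|_2^2 \le n^2\,\|A^m\|_\infty^2 \le n^2\rho^{2m}$ follows from norm equivalences alone. What your approach buys is a slightly more direct path (no detour through $\lambda_{\max}$), and the nonnegativity argument you give is a pleasant and self-contained way to see the entry bound; what the paper's approach buys is independence from the sign structure of $A$, which may matter in settings where the comparison-system matrix is not entrywise nonnegative.
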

\begin{proof}

We first bound $\lambda _{\max } (X_k )$ as follows:
\begin{align*}
\lambda _{\max } (X_k ) \le& \alpha ^2 \sum_{i = 0}^{k - 1} {\lambda _{\max } (A^i W_{k - i - 1} (A^T )^i )}  + \lambda _{\max } (A^k X_0 (A^T )^k )\\
\le& \alpha ^2 \sup _{j \ge 0} \lambda _{\max } (W_j )\sum_{i = 0}^{k - 1} {\lambda _{\max } (A^i (A^T )^i )}\\
&  + \lambda _{\max } (X_0 )\lambda _{\max } (A^k (A^T )^k )\\
=& \alpha ^2 \sup _{j \ge 0} \lambda _{\max } (W_j ) \sum_{i = 0}^{k - 1} {\| {A^i }\|_2^2 }  + \lambda _{\max } (X_0 )\| {A^k } \|_2^2\\
\le& \alpha ^2  W_{\max } n\sum_{i = 0}^{k - 1} {\| {A^i } \|_\infty ^2 }  + n\lambda _{\max } (X_0 )\| {A^k } \|_\infty ^2\\
\le& \alpha ^2 W_{\max } n\sum_{i = 0}^{k - 1} {\rho ^{2i} }  + n\lambda _{\max } (X_0 )\rho ^{2k}\\
\le& \alpha ^2 W_{\max } n \lim_{k \to \infty } \sum_{i = 0}^{k - 1} {\rho ^{2i} }  + n\lambda _{\max } (X_0 )\rho ^{2k}\\
\le& \frac{{\alpha ^2 W_{\max } n}}{{1 - \rho ^2 }} + n\lambda _{\max } (X_0 )\rho ^{2k}\\
\le& \frac{{\alpha ^2 W_{\max } n}}{{1 - \rho }} + n\lambda _{\max } (X_0 )\rho ^{2k},
\end{align*}
where the first inequality is due to $A^i W_{k-i-1}(A^T )^i \succeq 0$ and $A^k X_0 (A^T )^k \succeq 0$, the third inequality comes from~\cref{lemma:bound-W2} and $\left\|  \cdot  \right\|_2  \le \sqrt n \left\|  \cdot  \right\|_\infty$, the fourth inequality is due to~\cref{lemma:max-norm-system-matrix}, the sixth and last inequalities come from $\rho \in (0,1)$. On the other hand, since $X_k \succeq 0$, the diagonal elements are nonnegative. Therefore, we have ${\rm{tr}}(X_k ) \le n\lambda _{\max } (X_k )$. Combining the last two inequalities lead to
\begin{align*}
{\rm{tr}}(X_k ) \le n\lambda _{\max } (X_k ) \le \frac{{\alpha ^2 W_{\max } n^2 }}{{1 - \rho }} + n^2 \lambda _{\max } (X_0 )\rho ^{2k}.
\end{align*}

Moreover, noting the inequality $\lambda _{\max } (X_0) \le {\rm tr}(X_0 ) = {\rm tr}(x_0 x_0^T ) = \left\| {x_0 } \right\|_2^2$, and plugging $\rho = 1-\alpha d_{\min}(1-\gamma)$ into $\rho$ in the last inequality, one gets the desired conclusion.
\end{proof}

Now, we are ready to present the main results. In the first result, we provide a finite-time bound on the state error of the lower comparison system.
\begin{theorem}\label{thm:main1}
For any $k \geq 0$, we have
\begin{align}
{\mathbb E}[\| Q_k^L  - Q^*\|_2 ] \le \frac{{3\alpha ^{1/2} |{\cal S} \times {\cal A}|}}{{d_{\min }^{1/2} (1 - \gamma )^{3/2} }} + |{\cal S} \times {\cal A}| \|Q_0^L  - Q^*\|_2 \rho ^k.
\label{eq:8}
\end{align}
\end{theorem}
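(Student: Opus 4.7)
The plan is to reduce the first-moment bound on $\|Q_k^L - Q^*\|_2$ to the trace bound on the autocorrelation matrix $X_k = \mathbb{E}[x_k x_k^T]$ that is already supplied by \cref{lemma:basic1}, using Jensen's inequality as the only nontrivial probabilistic step.

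First, with $x_k := Q_k^L - Q^*$, I would observe that
\[
\mathbb{E}[\|x_k\|_2] \le \sqrt{\mathbb{E}[\|x_k\|_2^2]} = \sqrt{\mathbb{E}[x_k^T x_k]} = \sqrt{\mathrm{tr}(\mathbb{E}[x_k x_k^T])} = \sqrt{\mathrm{tr}(X_k)},
\]
where the first step is Jensen's inequality applied to the concave square root, and the remaining equalities use linearity of trace and of expectation together with $x_k^T x_k = \mathrm{tr}(x_k x_k^T)$.

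Next, I would invoke \cref{lemma:basic1} directly to get
\[
\mathrm{tr}(X_k) \le \frac{9 n^2 \alpha}{d_{\min}(1-\gamma)^3} + \|x_0\|_2^2 \, n^2 \rho^{2k},
\]
with $n = |\mathcal{S} \times \mathcal{A}|$. Applying the elementary subadditivity $\sqrt{a+b} \le \sqrt{a} + \sqrt{b}$ for $a,b\ge 0$ then yields
\[
\sqrt{\mathrm{tr}(X_k)} \le \frac{3 n \, \alpha^{1/2}}{d_{\min}^{1/2}(1-\gamma)^{3/2}} + n \, \|x_0\|_2 \, \rho^k,
\]
which, after substituting $x_0 = Q_0^L - Q^*$ and $n = |\mathcal{S} \times \mathcal{A}|$, is exactly the claimed inequality in~\eqref{eq:8}.

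There is no real obstacle here since the heavy lifting — bounding $\mathrm{tr}(X_k)$ via the recursion~\eqref{eq:correlation-update}, \cref{lemma:max-norm-system-matrix}, and \cref{lemma:bound-W2} — has already been carried out in \cref{lemma:basic1}. The only mild subtlety worth flagging in the write-up is justifying the passage from $\mathbb{E}[\|x_k\|_2]$ to $\sqrt{\mathrm{tr}(X_k)}$ cleanly, and the factor-of-$n$ looseness introduced earlier by the steps $\|\cdot\|_2 \le \sqrt{n}\|\cdot\|_\infty$ and $\mathrm{tr}(X_k) \le n\lambda_{\max}(X_k)$ inside \cref{lemma:basic1}; these propagate into the $|\mathcal{S}\times\mathcal{A}|$ prefactors of both terms in~\eqref{eq:8}, which matches the stated bound and so requires no further tightening at this stage.
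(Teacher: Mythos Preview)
Your proposal is correct and follows essentially the same approach as the paper: both arguments identify $\mathbb{E}[\|x_k\|_2^2]$ with $\mathrm{tr}(X_k)$, invoke \cref{lemma:basic1}, and then pass to the first moment via Jensen's inequality and the subadditivity of the square root. The only cosmetic difference is the order in which you apply Jensen and the trace bound, which is immaterial.
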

\begin{proof} Noting the relations
\begin{align*}
{\mathbb E}[\|Q_k^L  - Q^*\|_2^2 ] =& {\mathbb E}[(Q_k^L  - Q^* )^T (Q_k^L  - Q^* )]\\
 =& {\mathbb E}[{\rm tr}((Q_k^L  - Q^* )^T (Q_k^L - Q^* ))]\\
 =& {\mathbb E}[{\rm tr}((Q_k^L  - Q^* )(Q_k^L - Q^* )^T )]\\
 =& {\mathbb E}[{\rm tr}(X_k )],
\end{align*}
and using the bound in~\cref{lemma:basic1}, one gets
\begin{align}
{\mathbb E}[\|Q_k^L  - Q^*\|_2^2 ] \le \frac{{9\alpha n^2 }}{{d_{\min } (1 - \gamma )^3 }} + n^2 \|Q_0^L  - Q^*\|_2^2 \rho ^{2k}.\label{eq:9}
\end{align}
Taking the square root on both side of the last inequality, using the subadditivity of the square root function, the Jensen inequality, and the concavity of the square root function, we have the desired conclusion.
\end{proof}

The first term on the right-hand side of~\eqref{eq:8} can be made arbitrarily small by reducing the step-size $\alpha \in (0,1)$. The second bound exponentially vanishes as $k \to \infty$ at the rate of $\rho = 1 - \alpha d_{\min} (1 - \gamma) \in (0,1)$. Therefore, it proves the exponential convergence of the mean-squared error of the lower comparison system up to a constant bias. In the next subsection, we will investigate an analysis of an upper comparison system.

\subsection{Upper comparison system}
Now, let us consider the stochastic linear switching system~\cite{lee2021discrete}
\begin{align}
Q_{k+1}^U-Q^*= A_{Q_k}(Q_k^U-Q^*)+\alpha w_k,\quad Q_0^U-Q^*\in {\mathbb R}^{|{\cal S}\times {\cal A}|},
\label{eq:upper-system}
\end{align}
 where the stochastic noise $w_k$ is kept the same as the original system. We will call it the \emph{upper comparison system}.
% Similar to the lower comparison system, its main property is that if $Q_0^U-Q^*\ge Q_0-Q^*$ initially, then $Q_k^U-Q^*\ge Q_k-Q^*$ for all $k \ge 0$.
\begin{proposition}[{\cite{lee2021discrete}}]\label{thm:upper-bound}
Suppose $Q_0^U-Q^*\ge Q_0-Q^*$, where $\geq $ is used as the element-wise inequality. Then, $Q_k^U-Q^*\ge Q_k-Q^*$ for all $k \ge 0$.
\end{proposition}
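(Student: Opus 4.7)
The plan is induction on $k$, mirroring the structure of the proof of \cref{prop:lower-bound2} but with the affine term $b_{Q_k}$ now helping rather than hurting. The base case $k=0$ is exactly the hypothesis $Q_0^U - Q^* \ge Q_0 - Q^*$, so the work is in the inductive step.

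Assume $Q_k^U - Q^* \ge Q_k - Q^*$ element-wise. Subtracting the original Q-learning recursion \eqref{eq:swithcing-system-form} from the upper comparison system \eqref{eq:upper-system} — both driven by the same noise $\alpha w_k$ and the same switching signal $\pi_{Q_k}$, so that the $\alpha w_k$ terms cancel exactly — gives
\begin{align*}
(Q_{k+1}^U - Q^*) - (Q_{k+1} - Q^*) = A_{Q_k}\bigl[(Q_k^U - Q^*) - (Q_k - Q^*)\bigr] - b_{Q_k}.
\end{align*}
It therefore suffices to show both summands on the right are element-wise nonnegative.

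For the first summand, \cref{lemma:nonnegative-matrix} says $A_{Q_k}$ is a nonnegative matrix, and the induction hypothesis makes the bracketed vector nonnegative, so their product is nonnegative. For the second, recall $b_{Q_k} = \gamma DP(\Pi_{Q_k} - \Pi_{Q^*})Q^*$. Since $\pi_{Q^*}$ is by definition the greedy policy with respect to $Q^*$, for every $s \in \mathcal{S}$ we have $(\Pi_{Q^*}Q^*)(s) = \max_{a\in\mathcal{A}} Q^*(s,a) \ge Q^*(s,\pi_{Q_k}(s)) = (\Pi_{Q_k} Q^*)(s)$, hence $(\Pi_{Q_k} - \Pi_{Q^*})Q^* \le 0$ element-wise. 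Because $D$ and $P$ have only nonnegative entries and $\gamma \ge 0$, this sign propagates to give $b_{Q_k} \le 0$, so $-b_{Q_k} \ge 0$. Combining the two pieces closes the induction.

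There is no real obstacle here: the whole argument is a sign-propagation calculation, and the two facts it rests on (nonnegativity of $A_{Q_k}$ and the greedy-maximality $\Pi_{Q^*} Q^* \ge \Pi_{\pi} Q^*$ for any $\pi$) are both immediate. The only place requiring a moment of care is observing that the upper comparison system uses the same switching signal $\pi_{Q_k}$ as the true Q-learning iterate — this is precisely what makes the leading $A_{Q_k}$ cancel between the two recursions and isolates the affine term $b_{Q_k}$ as the sole source of the inequality.
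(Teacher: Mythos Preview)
Your proof is correct and takes essentially the same approach as the paper: induction on $k$, using that $A_{Q_k}$ is a nonnegative matrix (\cref{lemma:nonnegative-matrix}) together with the induction hypothesis, and that $b_{Q_k} \le 0$ via greedy-maximality of $\pi_{Q^*}$. The only cosmetic difference is that the paper writes the argument as a chain of element-wise inequalities starting from $Q_{k+1}-Q^*$, while you subtract the two recursions and show the resulting difference is nonnegative; the content is identical.
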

\begin{proof}
Suppose the result holds for some $k \ge 0$. Then, we have
\begin{align*}
(Q_{k+1}- Q^*)=& A_{Q_k}(Q_k- Q^*)+b_{Q_k}+\alpha w_{k}\\
\leq& A_{Q_k}(Q_k- Q^*)+\alpha w_{k}\\
\leq& A_{Q_k}(Q_k^U - Q^*)+\alpha w_k\\
=& Q_{k+1}^U-Q^*,
\end{align*}
where we used the fact that $b_{Q_k}=D(\gamma P\Pi_{Q_k} Q^*-\gamma P\Pi_{Q^*} Q^*)\le D(\gamma P\Pi_{Q^*} Q^*-\gamma P\Pi_{Q^*} Q^*)=0$ in the first inequality. The second inequality is due to the hypothesis $Q_k^U-Q^*\ge Q_k-Q^*$ and the fact that $A_{Q_k}$ is a nonnegative matrix. The proof is completed by induction.
\end{proof}

According to~\cref{thm:upper-bound}, the trajectory of the stochastic linear switching system in~\eqref{eq:upper-system} bounds that of the original system~\eqref{eq:swithcing-system-form} from above. Then, with the notation $x_k := Q_k^U  - Q^*$,~\eqref{eq:upper-system} can be concisely represented as the \emph{stochastic switching linear system}
\begin{align}
x_{k + 1} = A_{\sigma_k} x_k + \alpha w_k,\quad x_0 \in {\mathbb R}^n,\quad \forall k \geq 0,\label{eq:switching-system-form}
\end{align}
where $n:= |{\cal S}\times {\cal A}|$, ${\sigma _k} \in \{ 1,2, \ldots ,|\Theta |\}$ is the switching signal at time $k\geq 0$ determined by ${\sigma _k} = \varphi ({\pi _k})$ with ${\pi _k}(\cdot): = \arg {\min _{a \in {\cal A}}}{Q_k}( \cdot ,a) \in \Theta$, $\varphi :\Theta  \to \{ 1,2, \ldots ,|\Theta |\} $ is a one-to-one mapping from a deterministic policy $\pi \in \Theta$ to an integer in $\{ 1,2, \ldots ,|\Theta |\}$, and matrices $A_i, i\in \{ 1,2, \ldots ,|\Theta |\}$ are defined in~\eqref{eq:12}.

Compared to the lower comparison system~\eqref{eq:linear-system-form}, which is linear,~\eqref{eq:switching-system-form} is a switching system, which is much more complicated due to the dependency on $Q_k$ and $Q_k^U$. In particular, the system matrix $A_{Q_k}$ switches according to the change of $Q_k$, which  depends probabilistically on $Q_k^U$. Therefore, if we take the expectation on both sides, it is not possible to separate $A_{Q_k}$ and the state $Q_k^U-Q^*$ unlike the lower comparison system, making it much harder to analyze the stability of the upper comparison system. Therefore, the analysis used for the upper comparison system cannot be directly applied, i.e., the autocorrelation matrix ${\mathbb E}[x_k x_k^T ]$ cannot be obtained by using the simple linear recursion given in~\eqref{eq:correlation-update}. To overcome this difficulty, in the next subsection, we instead study an error system by subtracting the lower comparison system~\cite{lee2021discrete} from the upper comparison system.

\subsection{Analysis of original system}

In the previous subsections, we have introduced upper and lower comparison systems, and provided bounds on the corresponding expected state errors in~\cref{thm:main1}. Since the states of the lower and upper comparison systems bound the state of the original system from below and above, respectively, i.e.,
\begin{align}
Q_k^L  - Q^*  \le Q_k  - Q^*  \le Q_k^U  - Q^*,\label{eq:7}
\end{align}
one can prove that the mean state error of the original system is also bounded in terms of those of the upper and lower comparison systems.

However, as discussed previously, compared to the lower comparison system~\eqref{eq:linear-system-form}, which is linear,~\eqref{eq:switching-system-form} is a switching system, which is much more complicated due to the dependency on $Q_k$ and $Q_k^U$. To circumvent such a difficulty, we instead study an error system by subtracting the lower comparison system~\cite{lee2021discrete} from the upper comparison system:
\begin{align}
Q_{k+1}^U- Q_{k+1}^L
% =& \{I + \alpha (\gamma DP\Pi_{Q_k}-D)\}(Q_k^U-Q^*)\\
% & - \{I+\alpha (\gamma DP\Pi_{Q^*}-D)\} (Q_k^L-Q^*)\\
% =&\{I + \alpha (\gamma DP\Pi_{Q_k}-D)\} (Q_k^U-Q^*)\\
% &-\{I + \alpha (\gamma DP\Pi_{Q_k} - D)\}(Q_k^L-Q^*)\\
% &+ \alpha (\gamma DP\Pi_{Q_k}-\gamma DP\Pi_{Q^*})(Q_k^L-Q^*)\\
=A_{Q_k}(Q_k^U-Q_k^L) +B_{Q_k}(Q_k^L-Q^*), \label{eq:error-system}
\end{align}
where
\begin{align}
B_{Q_k}:=A_{Q_k}-A_{Q^*}=\alpha \gamma DP(\Pi_{Q_k}-\Pi_{Q^*}).\label{eq:B-matrix}
\end{align}

Here, the stochastic noise $\alpha w_k$ is canceled out in the error system. Matrices $(A_{Q_k}, B_{Q_k})$ switch according to the external signal $Q_k$, and $Q_k^L-Q^*$ can be seen as an external disturbance.
The key insight is as follows: if we can prove the stability of the error system, i.e., $Q_k^U-Q_k^L \to 0$ as $k\to\infty$, then since $Q_k^L \to Q^*$ as $k \to \infty$, we have $Q_k^U \to Q^*$ as well. Keeping this picture in mind, we can establish the following bound on the expected error ${\mathbb E}[\left\| {Q_k  - Q^* } \right\|_\infty  ]$.
\begin{theorem}[Convergence]\label{thm:main2}
For all $k \geq 0$, we have
\begin{align}
{\mathbb E}[\left\| {Q_k  - Q^* } \right\|_\infty  ] \le& \frac{{9 d_{\max } |{\cal S} \times {\cal A}|\alpha ^{1/2} }}{{d_{\min }^{3/2} (1 - \gamma )^{5/2} }} + \frac{{2|{\cal S} \times {\cal A}|^{3/2} }}{{1 - \gamma }}\rho ^k\nonumber\\
& + \frac{{4\alpha \gamma d_{\max } |{\cal S} \times {\cal A}|^{2/3} }}{{1 - \gamma }}k\rho ^{k - 1}\label{eq:3}
\end{align}
\end{theorem}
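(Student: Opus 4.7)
The plan is to start from the sandwich inequality $Q_k^L - Q^* \le Q_k - Q^* \le Q_k^U - Q^*$ established in~\eqref{eq:7}, which, applied componentwise together with the triangle inequality, yields
\begin{align*}
\|Q_k - Q^*\|_\infty \le \|Q_k^L - Q^*\|_\infty + \|Q_k^U - Q_k^L\|_\infty.
\end{align*}
Thus the proof reduces to combining two ingredients: (i) the bound on $\mathbb{E}[\|Q_k^L - Q^*\|_\infty]$ inherited from \cref{thm:main1} via $\|\cdot\|_\infty \le \|\cdot\|_2$ and Jensen's inequality; and (ii) a new bound on the expected gap $\mathbb{E}[\|Q_k^U - Q_k^L\|_\infty]$ derived from the error system~\eqref{eq:error-system}.

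For ingredient (ii) I would choose the coupled initialization $Q_0^L = Q_0 = Q_0^U$ (which satisfies the hypotheses of both \cref{prop:lower-bound2} and \cref{thm:upper-bound}) so that $Q_0^U - Q_0^L = 0$, and then unroll~\eqref{eq:error-system} as
\begin{align*}
Q_k^U - Q_k^L = \sum_{j=0}^{k-1} \left(\prod_{i=j+1}^{k-1} A_{Q_i}\right) B_{Q_j} (Q_j^L - Q^*).
\end{align*}
Taking $\|\cdot\|_\infty$ with submultiplicativity, \cref{lemma:max-norm-system-matrix} gives $\|A_{Q_i}\|_\infty \le \rho$, while a short row-sum computation from~\eqref{eq:B-matrix} gives the \emph{deterministic} estimate $\|B_{Q_j}\|_\infty \le 2\alpha\gamma d_{\max}$ (each row of $DP$ sums to $d(s,a) \le d_{\max}$, whereas each row of $\Pi_Q$ or $\Pi_{Q^*}$ is a standard basis vector, so the row sum of $\Pi_Q - \Pi_{Q^*}$ is at most $2$). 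Since these coefficients are non-random, the expectation passes through to yield
\begin{align*}
\mathbb{E}[\|Q_k^U - Q_k^L\|_\infty] \le 2\alpha\gamma d_{\max} \sum_{j=0}^{k-1} \rho^{k-1-j}\, \mathbb{E}[\|Q_j^L - Q^*\|_\infty].
\end{align*}
Substituting the two-term bound from \cref{thm:main1}, the constant $\alpha^{1/2}$-bias piece produces a geometric sum bounded by $1/(1-\rho) = 1/(\alpha d_{\min}(1-\gamma))$; after multiplication by $2\alpha\gamma d_{\max}$ this yields the $d_{\max} d_{\min}^{-3/2}(1-\gamma)^{-5/2}$ scaling of the first term in~\eqref{eq:3}. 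The exponentially decaying piece supplies $\sum_{j=0}^{k-1}\rho^{k-1-j}\rho^j = k\rho^{k-1}$, which together with $\|Q_0^L - Q^*\|_2$ bounded via \cref{lemma:bounded-Q} and \cref{assumption:bounded-Q0} produces the third term, while the direct Theorem 1 bound on $\|Q_k^L - Q^*\|_\infty$ supplies the second (pure $\rho^k$) term.

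The main obstacle is the probabilistic coupling between the switching matrices $A_{Q_i}, B_{Q_j}$ and the state $Q_j^L - Q^*$, which, as noted after~\eqref{eq:switching-system-form}, is precisely what prevents the autocorrelation recursion~\eqref{eq:correlation-update} used for the lower system from being transported to the upper system. The strategy circumvents this by taking $\|\cdot\|_\infty$ first: the uniform deterministic bounds $\|A_Q\|_\infty \le \rho$ and $\|B_Q\|_\infty \le 2\alpha\gamma d_{\max}$ hold pointwise on the sample path, so the randomness of $(A_{Q_i}, B_{Q_j})$ is eliminated before taking expectations and only $\mathbb{E}[\|Q_j^L - Q^*\|_\infty]$ remains to be controlled, which has already been handled in \cref{thm:main1}. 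A secondary (but essential) feature that makes this scheme viable is that the additive noise $\alpha w_k$ cancels in~\eqref{eq:error-system}, which is exactly why the argument is carried out on the difference of the two comparison systems rather than on the upper system in isolation.
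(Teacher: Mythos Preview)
Your proposal is correct and follows essentially the same route as the paper: the paper also couples $Q_0^U=Q_0^L=Q_0$, bounds the error system~\eqref{eq:error-system} via the deterministic estimates $\|A_{Q_k}\|_\infty\le\rho$ and $\|B_{Q_k}\|_\infty\le 2\alpha\gamma d_{\max}$, feeds in the two-term bound of \cref{thm:main1}, and then combines with the sandwich decomposition $\|Q_k-Q^*\|_\infty\le\|Q_k^L-Q^*\|_\infty+\|Q_k^U-Q_k^L\|_\infty$. The only cosmetic difference is that the paper iterates the one-step recursion $\mathbb{E}[\|Q_{i+1}^U-Q_{i+1}^L\|_\infty]\le\rho\,\mathbb{E}[\|Q_i^U-Q_i^L\|_\infty]+2\alpha\gamma d_{\max}\mathbb{E}[\|Q_i^L-Q^*\|_\infty]$ rather than writing the unrolled sum, which is of course equivalent.
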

\begin{proof}
Taking norm on the error system in~\eqref{eq:error-system}, we get
\begin{align*}
\| {Q_{k + 1}^U  - Q_{k + 1}^L } \|_\infty   \le& \| A_{Q_k } \|_\infty  \| Q_k^U  - Q_k^L \|_\infty \\
&+ \| B_{Q_k } \|_\infty  \| Q_k^L  - Q^*\|_\infty \\
 \le& \rho \|Q_k^U  - Q_k^L\|_\infty   + 2\alpha \gamma d_{\max } \|Q_k^L  - Q^*\|_\infty
\end{align*}
where the second inequality is due to~\cref{lemma:max-norm-system-matrix} and the definition in~\eqref{eq:B-matrix}.
Combining the last inequality with that in~\cref{thm:main1} and $\| {Q_k^L  - Q^* } \|_\infty   \le \| {Q_k^L  - Q^* } \|_2$ yields
\begin{align*}
{\mathbb E}[\| Q_{i + 1}^U  - Q_{i + 1}^L \|_\infty  ] \le& \rho {\mathbb E}[ \|Q_i^U  - Q_i^L\|_\infty]\\
& + 2\alpha \gamma d_{\max } \frac{{3|{\cal S} \times {\cal A}|\alpha ^{1/2} }}{{d_{\min }^{1/2} (1 - \gamma )^{3/2} }}\\
& + 2\alpha \gamma d_{\max } \|Q_0^L  - Q^*\|_2 |{\cal S} \times {\cal A}|\rho ^i
\end{align*}
for all $i\geq 0$. Applying the inequality successively for these values from $i=0$ to $i=k$ leads to
\begin{align*}
{\mathbb E}[{\| {Q_k^U - Q_k^L} \|_\infty }] \le& {\rho ^k}{\mathbb E}[{\| {Q_0^U - Q_0^L} \|_\infty }]\\
& + \frac{{6\gamma {d_{\max }}|{\cal S} \times {\cal A}|{\alpha ^{1/2}}}}{{d_{\min }^{3/2}{{(1 - \gamma )}^{5/2}}}}\\
& + k{\rho ^{k - 1}}2\alpha \gamma {d_{\max }}{ \| {Q_0^L - {Q^*}} \|_2}| {\cal S} \times {\cal A}|.
\end{align*}
Next, letting $Q_0^U  = Q_0^L = Q_0$ yields
\begin{align}
{\mathbb E}[\| Q_k^U  - Q_k^L\|_\infty  ] \le& \frac{{6\gamma d_{\max } |{\cal S} \times {\cal A}|\alpha ^{1/2} }}{{d_{\min }^{3/2} (1 - \gamma )^{5/2} }}\nonumber\\
& + k\rho ^{k - 1} 2\alpha \gamma d_{\max } \|Q_0  - Q^*\|_2 |{\cal S} \times {\cal A}|.\label{eq:10}
\end{align}
Using $\|Q_0  - Q^*\|_2  \le |{\cal S} \times {\cal A}|^{1/2} \|Q_0  - Q^*\|_\infty   \le |{\cal S} \times {\cal A}|^{1/2} \frac{2}{{1 - \gamma }}$ further leads to
\begin{align}
{\mathbb E}[\| Q_k^U  - Q_k^L \|_\infty  ] \le& \frac{{6\gamma d_{\max } |{\cal S} \times {\cal A}|\alpha ^{1/2} }}{{d_{\min }^{3/2} (1 - \gamma )^{5/2} }}\nonumber\\
& + k\rho ^{k - 1} 4 \alpha \gamma d_{\max } \frac{{|{\cal S} \times {\cal A}|^{2/3} }}{{1 - \gamma }}\label{eq:2}
\end{align}

On the other hand, one can prove the inequalities
\begin{align}
{\mathbb E}[\left\| {Q_k  - Q^* } \right\|_\infty ] =& {\mathbb E}[ \|Q_k  - Q_k^L  + Q_k^L  - Q^* \|_\infty ]\nonumber\\
 \le& {\mathbb E}[\| Q_k^L  - Q^* \|_\infty ] + {\mathbb E}[ \| Q_k  - Q_k^L \|_\infty ]\nonumber\\
 \le& {\mathbb E}[\| Q_k^L  - Q^* \|_\infty ] + {\mathbb E}[ \| Q_k^U  - Q_k^L \|_\infty ]\label{eq:11}
\end{align}
where the last inequality comes from the fact that $Q_k^U  - Q_k^L  \ge Q_k  - Q_k^L  \ge 0$ holds.
Combining the last inequality with that in~\cref{thm:main1} leads to
\begin{align*}
{\mathbb E}[\| Q_k  - Q^*\|_\infty  ] \le& {\mathbb E}[ \|Q_k^L  - Q^*\|_\infty  ] + {\mathbb E}[\|Q_k^U  - Q_k^L\|_\infty  ]\\
\le& \frac{{3\alpha ^{1/2} |{\cal S} \times {\cal A}|}}{{d_{\min }^{1/2} (1 - \gamma )^{3/2} }}\\
& + \frac{{2|{\cal S} \times {\cal A}|^{3/2} }}{{1 - \gamma }}\rho ^k  + {\mathbb E}[\|Q_k^U  - Q_k^L\|_\infty  ].
\end{align*}
Moreover, combining the above inequality with~\eqref{eq:2} yields the desired conclusion.
\end{proof}

Note that the first term in~\eqref{eq:3} is the constant error due to the constant step-size, which is scaled according to $\alpha \in (0,1)$. The second term in~\eqref{eq:3} is due to the gap between lower comparison system and original system, and the third term in~\eqref{eq:3} is due to the gap between upper comparison system and original system. The second term $O(\rho^k)$ exponentially decays, and the third term $O(k \rho^{k-1})$ also exponentially decays while the speed is slower than the second term due to the additional linearly increasing factor. The upper bound in~\eqref{eq:3} can be converted to looser but more interpretable forms as follows.
\begin{corollary}\label{thm:main3}
For any $k \geq 0$, we have
\begin{align}
{\mathbb E}[\|Q_k  - Q^*\|_\infty  ] \le& \frac{{9 d_{\max } |{\cal S} \times {\cal A}|\alpha ^{1/2} }}{{d_{\min }^{3/2} (1 - \gamma )^{5/2} }} + \frac{{2|{\cal S} \times {\cal A}|^{3/2} }}{{1 - \gamma }}\rho ^k \nonumber \\
& + \frac{{4\alpha \gamma d_{\max } |{\cal S} \times {\cal A}|^{2/3} }}{{1 - \gamma }}\frac{{ - 2}}{{\ln (\rho )}}\rho ^{\frac{{ - 1}}{{\ln (\rho )}} - 1} \rho ^{k/2}\label{eq:4}
\end{align}
and
\begin{align}
{\mathbb E}[\left\| {Q_k  - Q^* } \right\|_\infty  ] \le& \frac{{9 d_{\max } |{\cal S} \times {\cal A}|\alpha ^{1/2} }}{{d_{\min }^{3/2} (1 - \gamma )^{5/2} }}\nonumber\\
& + \frac{{2|{\cal S} \times {\cal A}|^{3/2} }}{{1 - \gamma }}\rho ^k\nonumber\\
& + \frac{{8\gamma d_{\max } |{\cal S} \times {\cal A}|^{2/3} }}{{1 - \gamma }}\frac{1}{{d_{\min } (1 - \gamma )}}\rho ^{k/2 - 1}\label{eq:5}
\end{align}

\end{corollary}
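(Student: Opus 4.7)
The plan is to derive both bounds directly from \cref{thm:main2} by replacing only the third term on the right-hand side of~\eqref{eq:3}, which has the form $C\,k\rho^{k-1}$ with $C := \frac{4\alpha\gamma d_{\max}|{\cal S}\times{\cal A}|^{2/3}}{1-\gamma}$. The first two terms of~\eqref{eq:3} already appear verbatim in both~\eqref{eq:4} and~\eqref{eq:5}, so the whole argument reduces to bounding the scalar function $k\mapsto k\rho^{k-1}$ by a pure exponential in $k$ (with a $k$-independent prefactor).

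For~\eqref{eq:4}, I factor $k\rho^{k-1} = (k\rho^{k/2-1})\,\rho^{k/2}$ and maximize the first factor over $k\ge 0$. Treating $k$ as a continuous variable and setting the derivative of $\ln(t\rho^{t/2-1}) = \ln t + (t/2-1)\ln\rho$ to zero yields $1/t = -\tfrac{1}{2}\ln\rho$, so the unique positive maximizer (which exists because $\ln\rho<0$) is $t^\star = -2/\ln\rho$. Substituting back gives the uniform estimate
\[
k\rho^{k-1}\le \frac{-2}{\ln\rho}\,\rho^{-1/\ln\rho - 1}\,\rho^{k/2},\qquad k\ge 0,
\]
and multiplying through by $C$ produces exactly the third term of~\eqref{eq:4}.

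For~\eqref{eq:5}, I further simplify the prefactor above using two elementary scalar inequalities. First, the concavity bound $-\ln\rho \ge 1-\rho$ together with the definition~\eqref{eq:rho} gives
\[
\frac{-2}{\ln\rho}\le \frac{2}{1-\rho} = \frac{2}{\alpha d_{\min}(1-\gamma)}.
\]
Second, the identity $\rho^{-1/\ln\rho} = \exp\!\bigl((-1/\ln\rho)\ln\rho\bigr) = e^{-1}\le 1$ yields $\rho^{-1/\ln\rho - 1}\le \rho^{-1}$. Chaining these two bounds shows $k\rho^{k-1}\le \frac{2}{\alpha d_{\min}(1-\gamma)}\rho^{k/2-1}$, and multiplying by $C$ makes the $\alpha$ in $C$ cancel the $1/\alpha$ here, producing exactly the third term of~\eqref{eq:5}.

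No serious obstacle is anticipated: the entire argument is a one-variable calculus optimization followed by two standard log–exponential inequalities, and the only care needed is sign-tracking around $\ln\rho<0$. The main conceptual point is that decoupling the polynomial prefactor $k$ from the exponential $\rho^{k-1}$ costs only a factor of $\rho^{-k/2}$, which is absorbed into a finite, closed-form constant because $\rho\in(0,1)$ makes the continuous maximizer $t^\star=-2/\ln\rho$ finite and positive.
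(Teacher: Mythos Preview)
Your proposal is correct and follows essentially the same approach as the paper: factor $k\rho^{k-1}=(k\rho^{k/2-1})\rho^{k/2}$, maximize the first factor at $t^\star=-2/\ln\rho$ to obtain~\eqref{eq:4}, and then simplify the resulting constant to obtain~\eqref{eq:5}. Your derivation of~\eqref{eq:5} is in fact slightly cleaner than the paper's, since you use the exact identity $\rho^{-1/\ln\rho}=e^{-1}\le 1$ rather than the chain of log inequalities the paper invokes.
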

\begin{proof}
In~\eqref{eq:3}, we focus on the term $k\rho ^{k - 1}  = k\rho ^{k/2 + k/2 - 1}  = k\rho ^{k/2 - 1} \rho ^{k/2}$. Let $f(x) = x\rho ^{x/2}  = x\rho ^{x/2}$. Checking the first-order optimality condition
\begin{align*}
\frac{{df(x)}}{{dx}} = \frac{d}{{dx}}x\rho ^{x/2}  = \rho ^{x/2}  + x\frac{1}{2}\rho ^{x/2} \ln (\rho ) = 0,
\end{align*}
it follows that its maximum point is $x = \frac{{ - 2}}{{\ln (\rho )}}$, and the corresponding maximum value is $f\left( {\frac{{ - 2}}{{\ln (\rho )}}} \right) = \frac{{ - 2}}{{\ln (\rho )}}\rho ^{\frac{{ - 1}}{{\ln (\rho )}}}$. Therefore, we have the bounds $k\rho ^{k - 1}  = k\rho ^{k/2} \rho ^{ - 1} \rho ^{k/2}  \le \frac{{ - 2}}{{\ln (\rho )}}\rho ^{\frac{{ - 1}}{{\ln (\rho )}} - 1} \rho ^{k/2}$. Combining this bound with~\eqref{eq:3}, one gets the first bound in~\eqref{eq:4}.
To obtain the second inequality in~\eqref{eq:5}, we use the relation $1 - \frac{1}{x} \le \ln x \le x - 1,\forall x > 0$ to obtain
\begin{align*}
\frac{1}{{\ln (\rho ^{ - 1} )}}\rho ^{\frac{1}{{\ln (\rho ^{ - 1} )}}}  \le& \frac{1}{{1 - \frac{1}{{\rho ^{ - 1} }}}}\rho ^{\rho ^{ - 1}  - 1}\\
\le& \frac{1}{{\alpha d_{\min } (1 - \gamma )}}\rho ^{\frac{1}{{1 - \alpha d_{\min } (1 - \gamma )}} - 1}\\
\le& \frac{1}{{\alpha d_{\min } (1 - \gamma )}},
\end{align*}
where the last inequality uses $\alpha \in (0,1)$ in~\cref{assumption:step-size}.
Combining the above bound with~\eqref{eq:4},~\eqref{eq:5} follows. This completes the proof.
\end{proof}

\begin{remark}
A probabilistic error bound can be derived from the expected error bound by leveraging various concentration inequalities, such as the Markov inequality. For instance, using the Markov inequality, we have
\begin{align*}
{\mathbb P}[{\left\| {{Q_k} - {Q^*}} \right\|_\infty } < \varepsilon ] \ge& 1 - \frac{{9\gamma {d_{\max }}|{\cal S} \times {\cal A}|{\alpha ^{1/2}}}}{{\varepsilon d_{\min }^{3/2}{{(1 - \gamma )}^{5/2}}}}\\
&-\frac{{2|{\cal S} \times {\cal A}|^{3/2}}}{{\varepsilon (1 - \gamma )}}{\rho ^k}\\
& - \frac{{8\gamma {d_{\max }}|{\cal S} \times {\cal A}{|^{2/3}}}}{{\varepsilon (1 - \gamma )}}\frac{1}{{{d_{\min }}(1 - \gamma )}}{\rho ^{k/2 - 1}}.
\end{align*}
The right-hand side converges to one as $k \to \infty$ and $\alpha \to 0$.
\end{remark}

\section{Comparative analysis}

\begin{table}[h!]
\caption{Comparative analysis of several results: $t_{\rm cover}$ is the cover time; $t_{\rm mix}$ is the mixing time; $\tilde {\cal O}$ ignores the polylogarithmic factors}
\begin{center}
{\tiny
\begin{tabular}{c c c}
\hline Method & Sample complexity & Observation model\\
\hline
Ours & $\tilde {\cal O}\left( {\frac{{\gamma ^2 d_{\max }^2 |{\cal S} \times {\cal A}|^2 }}{{\varepsilon ^2 d_{\min }^4 (1 - \gamma )^6 }}} \right)$ & i.i.d. \\

Lee et. al. \cite{lee2021discrete} & $\tilde {\cal O}\left( \frac{d_{\max}^4 |{\cal S}\times {\cal A}|^4}{\varepsilon^4 d_{\min}^6 (1-\gamma)^{10}} \right)$ & i.i.d. \\

Beck et. al. \cite{beck2012error} & $\tilde {\cal O}\left( {\frac{t_{\rm cover}^3 |{\cal S}\times {\cal A}|}{(1-\gamma)^5 \varepsilon^2}} \right)$ & non-i.i.d. \\

Li et. al. \cite{li2020sample} & $\tilde {\cal O}\left( \frac{1}{d_{\min}(1-\gamma)^5 \varepsilon^2} + \frac{t_{\rm mix}}{d_{\min}(1-\gamma)}\right)$ & non-i.i.d. \\

Chen et. al. \cite{chen2021lyapunov} & $\tilde {\cal O}\left(\frac{1}{d_{\min}^3(1-\gamma)^5\varepsilon^2}\right)$ & non-i.i.d.\\

Qu et. al. \cite{qu2020finite} & $\tilde {\cal O}\left( {\frac{{{t_{{\rm{mix}}}}|{\cal S} \times {\cal A}{|^2}}}{{{{(1 - \gamma )}^5}{\varepsilon ^2}}}} \right)$  & non-i.i.d. \\

Even-Dar et. al. \cite{even2003learning} & $\tilde {\cal O}\left( {\frac{{{{({t_{{\rm{cover}}}})}^{\frac{1}{{1 - \gamma }}}}}}{{{{(1 - \gamma )}^4}{\varepsilon ^2}}}} \right)$  & non-i.i.d. \\

\hline
\end{tabular}}
\end{center}\label{table1}
\end{table}
The sample complexities of Q-learning, as analyzed and reported in various existing works~\cite{beck2012error,li2020sample,chen2021lyapunov,qu2020finite,even2003learning,lee2021discrete}, are summarized in~\cref{table1}, where $t_{\rm cover}$ represents the cover time, $t_{\rm mix}$ denotes the mixing time, and $\tilde {\cal O}$ omits the polylogarithmic factors, and the proof of the sample complexity based on the proposed convergence bound in~\cref{thm:main3} is given in Appendix. It is worth noting that most of these analyses adopt non-i.i.d. observation models. To account for these non-i.i.d. observation models, the cover time assumptions are considered in~\cite{beck2012error,even2003learning}, while the mixing time assumptions are employed in~\cite{li2020sample,qu2020finite,chen2021lyapunov}. These sample complexity bounds are derived under various assumptions and conditions, making it generally impractical to make direct comparisons among them.
However, it is worth noting that the proposed sample complexity does not appear to be consistently tighter than existing approaches, which represents a limitation of our method. Regarding the step-size conditions, our proposed finite-time analysis allows for a step-size $\alpha \in (0,1)$, which is more flexible compared to~\cite{chen2021lyapunov,beck2012error,li2020sample}. This is because the constant step-sizes used in~\cite{chen2021lyapunov,beck2012error,li2020sample} impose more restrictive ranges for the finite-time analysis. In our view, the main advantage of the proposed approach lies in its introduction of a unique switching system and control viewpoints. Building upon these perspectives, we have developed clear and simpler analysis frameworks for finite-time error bounds. These perspectives not only offer simplicity but also yield valuable insights into Q-learning. The proposed techniques, based on the foundational principles of systems and control theory, render the overall analysis more accessible and intuitive particularly for people with a background in control theory.

\section{Conclusion}\label{sec:conclusion}
In this paper, we have revisited the switching system framework in~\cite{lee2021discrete} to analyze the finite-time convergence bound of Q-learning. We have improved the analysis in~\cite{lee2021discrete} by replacing the average iterate with the final iterate, which is simpler and more common in the literature. The proposed finite-time error bounds are more general than most existing bounds for the constant step-size Q-learning in terms of the allowable range of step-sizes. Besides, the proposed analysis potentially offers additional insights on analysis of Q-learning, and complements existing approaches. Potential future topics include finite-time analysis of SARSA, double Q-learning, and actor-critic using similar dynamic system viewpoints.

\bibliographystyle{IEEEtran}
\bibliography{reference}

% Generated by IEEEtran.bst, version: 1.14 (2015/08/26)
\begin{thebibliography}{10}
\providecommand{\url}[1]{#1}
\csname url@samestyle\endcsname
\providecommand{\newblock}{\relax}
\providecommand{\bibinfo}[2]{#2}
\providecommand{\BIBentrySTDinterwordspacing}{\spaceskip=0pt\relax}
\providecommand{\BIBentryALTinterwordstretchfactor}{4}
\providecommand{\BIBentryALTinterwordspacing}{\spaceskip=\fontdimen2\font plus
\BIBentryALTinterwordstretchfactor\fontdimen3\font minus
  \fontdimen4\font\relax}
\providecommand{\BIBforeignlanguage}[2]{{%
\expandafter\ifx\csname l@#1\endcsname\relax
\typeout{** WARNING: IEEEtran.bst: No hyphenation pattern has been}%
\typeout{** loaded for the language `#1'. Using the pattern for}%
\typeout{** the default language instead.}%
\else
\language=\csname l@#1\endcsname
\fi
#2}}
\providecommand{\BIBdecl}{\relax}
\BIBdecl

\bibitem{sutton1998reinforcement}
R.~S. Sutton and A.~G. Barto, \emph{Reinforcement learning: {A}n
  introduction}.\hskip 1em plus 0.5em minus 0.4em\relax MIT Press, 1998.

\bibitem{mnih2015human}
V.~Mnih, K.~Kavukcuoglu, D.~Silver, A.~A. Rusu, J.~Veness, M.~G. Bellemare,
  A.~Graves, M.~Riedmiller, A.~K. Fidjeland, G.~Ostrovski \emph{et~al.},
  ``Human-level control through deep reinforcement learning,'' \emph{Nature},
  vol. 518, no. 7540, p. 529, 2015.

\bibitem{wang2016dueling}
Z.~Wang, T.~Schaul, M.~Hessel, H.~Hasselt, M.~Lanctot, and N.~Freitas,
  ``Dueling network architectures for deep reinforcement learning,'' in
  \emph{International conference on machine learning}, 2016, pp. 1995--2003.

\bibitem{lillicrap2016continuous}
T.~P. Lillicrap, J.~J. Hunt, A.~Pritzel, N.~Heess, T.~Erez, Y.~Tassa,
  D.~Silver, and D.~Wierstra, ``Continuous control with deep reinforcement
  learning.'' in \emph{International Conference on learning representations},
  2016.

\bibitem{heess2015memory}
N.~Heess, J.~J. Hunt, T.~P. Lillicrap, and D.~Silver, ``Memory-based control
  with recurrent neural networks,'' \emph{arXiv preprint arXiv:1512.04455},
  2015.

\bibitem{van2016deep}
H.~Van~Hasselt, A.~Guez, and D.~Silver, ``Deep reinforcement learning with
  double {Q}-learning,'' in \emph{Proceedings of the AAAI conference on
  artificial intelligence}, vol.~30, no.~1, 2016.

\bibitem{bellemare2017distributional}
M.~G. Bellemare, W.~Dabney, and R.~Munos, ``A distributional perspective on
  reinforcement learning,'' in \emph{International Conference on Machine
  Learning}, 2017, pp. 449--458.

\bibitem{schulman2015trust}
J.~Schulman, S.~Levine, P.~Abbeel, M.~Jordan, and P.~Moritz, ``Trust region
  policy optimization,'' in \emph{International conference on machine
  learning}, 2015, pp. 1889--1897.

\bibitem{schulman2017proximal}
J.~Schulman, F.~Wolski, P.~Dhariwal, A.~Radford, and O.~Klimov, ``Proximal
  policy optimization algorithms,'' \emph{arXiv preprint arXiv:1707.06347},
  2017.

\bibitem{watkins1992q}
C.~J. C.~H. Watkins and P.~Dayan, ``Q-learning,'' \emph{Machine learning},
  vol.~8, no. 3-4, pp. 279--292, 1992.

\bibitem{tsitsiklis1994asynchronous}
J.~N. Tsitsiklis, ``Asynchronous stochastic approximation and {Q}-learning,''
  \emph{Machine learning}, vol.~16, no.~3, pp. 185--202, 1994.

\bibitem{jaakkola1994convergence}
T.~Jaakkola, M.~I. Jordan, and S.~P. Singh, ``Convergence of stochastic
  iterative dynamic programming algorithms,'' in \emph{Advances in neural
  information processing systems}, 1994, pp. 703--710.

\bibitem{borkar2000ode}
V.~S. Borkar and S.~P. Meyn, ``The {ODE} method for convergence of stochastic
  approximation and reinforcement learning,'' \emph{SIAM Journal on Control and
  Optimization}, vol.~38, no.~2, pp. 447--469, 2000.

\bibitem{hasselt2010double}
H.~V. Hasselt, ``Double {Q}-learning,'' in \emph{Advances in Neural Information
  Processing Systems}, 2010, pp. 2613--2621.

\bibitem{melo2008analysis}
F.~S. Melo, S.~P. Meyn, and M.~I. Ribeiro, ``An analysis of reinforcement
  learning with function approximation,'' in \emph{Proceedings of the 25th
  international conference on Machine learning}, 2008, pp. 664--671.

\bibitem{lee2020unified}
D.~Lee and N.~He, ``A unified switching system perspective and convergence
  analysis of {Q}-learning algorithms,'' in \emph{34th Conference on Neural
  Information Processing Systems, NeurIPS 2020}, 2020.

\bibitem{devraj2017zap}
A.~M. Devraj and S.~P. Meyn, ``Zap {Q}-learning,'' in \emph{Proceedings of the
  31st International Conference on Neural Information Processing Systems},
  2017, pp. 2232--2241.

\bibitem{szepesvari1998asymptotic}
C.~Szepesv{\'a}ri, ``The asymptotic convergence-rate of {Q}-learning,'' in
  \emph{Advances in Neural Information Processing Systems}, 1998, pp.
  1064--1070.

\bibitem{kearns1999finite}
M.~J. Kearns and S.~P. Singh, ``Finite-sample convergence rates for
  {Q}-learning and indirect algorithms,'' in \emph{Advances in neural
  information processing systems}, 1999, pp. 996--1002.

\bibitem{even2003learning}
E.~Even-Dar and Y.~Mansour, ``Learning rates for {Q}-learning,'' \emph{Journal
  of machine learning Research}, vol.~5, no. Dec, pp. 1--25, 2003.

\bibitem{azar2011speedy}
M.~G. Azar, R.~Munos, M.~Ghavamzadeh, and H.~J. Kappen, ``Speedy
  {Q}-learning,'' in \emph{Proceedings of the 24th International Conference on
  Neural Information Processing Systems}, 2011, pp. 2411--2419.

\bibitem{beck2012error}
C.~L. Beck and R.~Srikant, ``Error bounds for constant step-size
  {Q}-learning,'' \emph{Systems \& Control letters}, vol.~61, no.~12, pp.
  1203--1208, 2012.

\bibitem{wainwright2019stochastic}
M.~J. Wainwright, ``Stochastic approximation with cone-contractive operators:
  Sharp $\ell_\infty$-bounds for {Q}-learning,'' \emph{arXiv preprint
  arXiv:1905.06265}, 2019.

\bibitem{qu2020finite}
G.~Qu and A.~Wierman, ``Finite-time analysis of asynchronous stochastic
  approximation and {Q}-learning,'' \emph{arXiv preprint arXiv:2002.00260},
  2020.

\bibitem{li2020sample}
G.~Li, Y.~Wei, Y.~Chi, Y.~Gu, and Y.~Chen, ``Sample complexity of asynchronous
  {Q}-learning: Sharper analysis and variance reduction,'' \emph{arXiv preprint
  arXiv:2006.03041}, 2020.

\bibitem{chen2021lyapunov}
Z.~Chen, S.~T. Maguluri, S.~Shakkottai, and K.~Shanmugam, ``A {L}yapunov theory
  for finite-sample guarantees of asynchronous {Q}-learning and {TD}-learning
  variants,'' \emph{arXiv preprint arXiv:2102.01567}, 2021.

\bibitem{lee2020periodic}
D.~Lee and N.~He, ``Periodic {Q}-learning,'' in \emph{Learning for dynamics and
  control}, 2020, pp. 582--598.

\bibitem{lee2021discrete}
D.~Lee, J.~Hu, and N.~He, ``A discrete-time switching system analysis of
  {Q}-learning,'' \emph{SIAM Journal on Control and Optimization (accepted)},
  2022.

\bibitem{kushner2003stochastic}
H.~Kushner and G.~G. Yin, \emph{Stochastic approximation and recursive
  algorithms and applications}.\hskip 1em plus 0.5em minus 0.4em\relax Springer
  Science \& Business Media, 2003, vol.~35.

\bibitem{liberzon2003switching}
D.~Liberzon, \emph{Switching in systems and control}.\hskip 1em plus 0.5em
  minus 0.4em\relax Springer Science \& Business Media, 2003.

\bibitem{lin2009stability}
H.~Lin and P.~J. Antsaklis, ``Stability and stabilizability of switched linear
  systems: a survey of recent results,'' \emph{IEEE Transactions on Automatic
  control}, vol.~54, no.~2, pp. 308--322, 2009.

\bibitem{chen1995linear}
C.-T. Chen, \emph{Linear System Theory and Design}.\hskip 1em plus 0.5em minus
  0.4em\relax Oxford University Press, Inc., 1995.

\bibitem{khalil2002nonlinear}
H.~K. Khalil, \emph{Nonlinear systems}, 2002.

\bibitem{gosavi2006boundedness}
A.~Gosavi, ``Boundedness of iterates in {Q}-learning,'' \emph{Systems \&
  Control letters}, vol.~55, no.~4, pp. 347--349, 2006.

\bibitem{sutton1988learning}
R.~S. Sutton, ``Learning to predict by the methods of temporal differences,''
  \emph{Machine learning}, vol.~3, no.~1, pp. 9--44, 1988.

\end{thebibliography}

\section{Appendix}
Using the bound in~\eqref{eq:5}, to achieve ${\mathbb E}[\|Q_k- Q^* \|_\infty] < \varepsilon$, a sufficient condition is
\begin{align*}
&\frac{{9\gamma d_{\max } |{\cal S} \times {\cal A}|\alpha ^{1/2} }}{{d_{\min }^{3/2} (1 - \gamma )^{5/2} }} + \frac{{2|{\cal S} \times {\cal A}|^{3/2} }}{{1 - \gamma }}\rho ^k\\
& + \frac{{8\gamma d_{\max } |{\cal S} \times {\cal A}|^{2/3} }}{{1 - \gamma }}\frac{1}{{d_{\min } (1 - \gamma )}}\rho ^{k/2 - 1}  \le \varepsilon.
\end{align*}
The inequality holds if each of the three terms is bounded by $\varepsilon /3$.
For the first term, the bound $\frac{{9\gamma d_{\max } |{\cal S} \times {\cal A}|\alpha ^{1/2} }}{{d_{\min }^{3/2} (1 - \gamma )^{5/2} }} \le \varepsilon /3$ leads to $\alpha  \le \frac{{\varepsilon ^2 d_{\min }^3 (1 - \gamma )^5 }}{{729\gamma ^2 d_{\max }^2 |{\cal S} \times {\cal A}|^2 }}$.
Therefore, we first let $\alpha$ equal to the right-hand side of the above inequality.
For the second term, the bound $\frac{{2|{\cal S} \times {\cal A}|^{3/2} }}{{1 - \gamma }}\rho ^k  \le \varepsilon /3$ leads to $k \ge \frac{{\ln \left( {\frac{{\varepsilon (1 - \gamma )}}{{6|{\cal S} \times {\cal A}|^{3/2} }}} \right)}}{{\ln (\rho )}}$.
Using the relation $1 - \frac{1}{x} \le \ln x \le x - 1,\forall x > 0$, a sufficient condition for the above condition is
\begin{align*}
k \ge \frac{{729\gamma ^2 d_{\max }^2 |{\cal S} \times {\cal A}|^2 }}{{\varepsilon ^2 d_{\min }^4 (1 - \gamma )^6 }}\ln \left( {\frac{{6|{\cal S} \times {\cal A}|^{3/2} }}{{\varepsilon (1 - \gamma )}}} \right).
\end{align*}
For the last term, the bound $\frac{{8\gamma d_{\max } |{\cal S} \times {\cal A}|^{2/3} }}{{d_{\min } (1 - \gamma )^2 }}\rho ^{k/2 - 1}  \le \frac{\varepsilon }{3}$ yields $k \ge \frac{{2\ln \left( {\frac{{\varepsilon d_{\min } (1 - \gamma )^2 }}{{24\gamma d_{\max } |{\cal S} \times {\cal A}|^{2/3} }}} \right)}}{{\ln (\rho )}}$.
Again, using the relation $1 - \frac{1}{x} \le \ln x \le x - 1,\forall x > 0$ results in the sufficient condition $k \ge \ln \left( {\frac{{24\gamma d_{\max } |{\cal S} \times {\cal A}|^{2/3} }}{{\varepsilon d_{\min } (1 - \gamma )^2 }}} \right)\frac{{1458\gamma ^2 d_{\max }^2 |{\cal S} \times {\cal A}|^2 }}{{\varepsilon ^2 d_{\min }^4 (1 - \gamma )^6 }}$. Combining the two bounds leads to the desired conclusion.

\end{document}